
\documentclass[preprint]{ptephy_v1}

\preprintnumber{XXXX-XXXX} 
\usepackage{hyperref}




\usepackage{bm}
\usepackage{url}
\usepackage{amsmath}
\usepackage{amsfonts}
\usepackage{mathrsfs}
\usepackage{physics}
\usepackage{ascmac}
\usepackage{fancybox}
\usepackage{ulem}
\usepackage{amsthm}
\usepackage{amssymb}
\usepackage{tikz}
\usepackage{graphicx}
\usepackage{here}
\usepackage{latexsym}
\usepackage{mathtools}

\theoremstyle{plain}

\newtheorem{prop}{Proposition}[section]
\newtheorem{lem}{Lemma}[section]
\theoremstyle{definition}

\newtheorem{cor}{Corollary}[section]

\begin{document}

\title{Advantages of the Kirkwood-Dirac distribution among general quasi-probabilities for finite-state quantum systems}

\author[1]{Shun Umekawa}
\affil[1]{Department of Physics, the University of Tokyo, 7-3-1 Hongo, Bunkyo-ku, Tokyo 113-0033, Japan\email{umeshun2003@g.ecc.u-tokyo.ac.jp}}

\author[2]{Jaeha Lee}
\affil[2]{Institute of Industrial Science, the University of Tokyo, 5-1-5 Kashiwanoha, Kashiwa, Chiba 277-8574, Japan}

\author[2]{Naomichi Hatano}

\begin{abstract} 
We investigate features of the quasi-joint-probability distribution for  finite-state quantum systems, especially the two-state and three-state quantum systems, comparing different types of quasi-joint-probability distributions based on the general framework of quasi-classicalization.
We show from two perspectives that the Kirkwood-Dirac distribution is the quasi-joint-probability distribution that behaves nicely for the finite-state quantum systems.
One is the similarity to the genuine probability and the other is the information that we can obtain from the quasi-probability.
By introducing the concept of the possible values of observables, we show for the finite-state quantum systems that the Kirkwood-Dirac distribution behaves more similarly to the genuine probability distribution in contrast to most of the other quasi-probabilities including the Wigner function.
We also prove that the states of the two-state and three-state quantum systems can be completely distinguished by the Kirkwood-Dirac distribution of only two directions of the spin and point out for the two-state system that the imaginary part of the quasi-probability is essential for the distinguishability of the state.
\end{abstract}

\subjectindex{xxxx, xxx}

\maketitle

\section{Introduction}

Because of the non-commutativity of observables, it is well-known that the genuine joint-probability distribution of multiple observables cannot be defined in quantum theory.
However, in order to obtain an intuitive interpretation of the quantum theory, many attempts have been made to build a quantum analogue of the joint-probability distribution, which is called the \textit{quasi-joint-probability distribution}, or shortly, \textit{quasi-probability} by many pioneers including Wigner and Kirkwood 
\cite{Wigner1932, Kirkwood1933, Dirac1945, Margenau1961, Born1925, Husimi1940, Glauber1963, Sudarshan1963}.
Though the quasi-probabilities had been at first defined in a heuristic manner, attempts to construct the general treatment of the quasi-probabilities were also made.
The unification of some of the quasi-probabilities of the position and momentum was proposed by Cohen \cite{Cohen2005}, and later all the quasi-probabilities mentioned above were indeed unified in the general framework of \textit{quasi-classicalization} by Lee and Tsutsui \cite{Lee2017, Lee2018}.

Perhaps the most famous one of these quasi-probabilities is the Wigner function \cite{Wigner1932}, which is an example of the quasi-joint-probability distribution of the position and the momentum of a particle.
The Wigner function is useful in the sense that its negativity shows the non-classicality \cite{Kenfack2004}, especially the contextuality \cite{Spekkens2008,Booth2022,haferkamp2021} of the state.
It is indeed widely used in the field of quantum optics and quantum computation \cite{Delfosse2015,Raussendorf2017,Maffei2023}.

On the other hand, there is not much research that focused on the quasi-probability for other systems, such as the two-state quantum system.
Finite-state systems has appeared only as a mere example \cite{Margenau1961} and has not explored seriously.
Indeed, it became possible to compare different types of quasi-probabilities systematically only after the unification by Lee and Tsutsui \cite{Lee2017,Lee2018}.
The question here is which quasi-probabilities behave more nicely, which behave more similarly to genuine probability and which give us more information about the state.
This type of argument was conducted by Hofmann \cite{Hofmann2014}, who pointed out that some reasonable condition for the quasi-probability to behave more similarly to the genuine probability requires it to be the Kirkwood-Dirac distribution \cite{Hofmann2014}.
The Kirkwood-Dirac distribution \cite{Kirkwood1933,Dirac1945} is also useful in that it gives a statistical interpretation \cite{Ozawa2011,Morita2013,Lee2017,Lee2018} of the weak value introduced by Aharonov \cite{Aharonov1964,Aharonov1988}, though it is not much better known than the Wigner function.

In the present work, we investigate features of the quasi-joint-probability distribution for finite-state quantum systems, especially the two-state and three-state quantum systems and compare among types of quasi-probabilities.
Through the following argument, we will show from a point of view different from Hofmann's work that the Kirkwood-Dirac distribution is the quasi-joint-probability distribution that behaves nicely in finite-state systems contrary to optical systems.

In Sec.~\ref{Quasi-Joint-Probability Distributions}, we first review the general framework of the quasi-joint-probability distribution and point out a problem that uniquely occurs for finite-state systems. 
We then show the superiority of the Kirkwood-Dirac distribution in the two-state quantum system in Sec.~\ref{Quasi-Probabilities for Two-State Quantum System} and the three-state quantum system in Sec.~\ref{Quasi-Probabilities for Three-State Quantum System}.
In Sec.~\ref{Quasi-Probabilities for General Finite-State Quantum System}, we refer to a general statement for the finite-state quantum systems.

\section{Quasi-Joint-Probability Distributions}
\label{Quasi-Joint-Probability Distributions}
The purpose of the present section is to introduce the concept of the quasi-joint-probability distribution and to review its general framework \cite{Lee2017,Lee2018}.
We first look at the Born rule and functional calculus, which is the trivial case of quasi-classicalization and the quantization in the sense that they can be uniquely defined in the standard quantum theory.
We then define the quasi-joint-probability distribution and the quantization of general physical quantities as an extension of the trivial case by introducing the quasi-joint-spectrum distribution\cite{Lee2018}. 
Through this framework, the duality of quantization and quasi-classicalization together with the arbitrariness of their definition are revealed.
We then look at the real quasi-classicalization, a quasi-classicalization that generates a real-valued quasi-joint-probability distribution with respect to arbitrary states \cite{Lee2018}.
At the end of this section, we will point out an interesting feature of the quasi-joint-probability distribution of finite-state quantum systems.

\subsection{The general framework of quasi-probabilities and quantizations}
\label{The General Framework of Quasi-Probabilities and Quantizations}
In quantum theory, an orthogonal projection \(\hat{E}_A(a)\) onto the eigenspace corresponding to the eigenvalue \(a\) of an observable \(A\) plays an important role.
By using the orthogonal projection, the probability distribution of the value \(a\) of the observable \(A\) with respect to the state described by a density operator \(\rho\) is derived from the Born rule as in
\begin{equation}
\label{Born rule}
P_\rho(a) = \mathrm{Tr}[E_A(a)\,\rho\,].
\end{equation}
The quantization of an observable described as an univariable function of \(A\) is also derived by using the orthogonal projection from the functional calculus:
\begin{equation}
\label{functional calculus}
f(A) = \int_{-\infty}^\infty f(a) E_A(a) \mathrm{d}a.
\end{equation}
Here, \(A\) denotes the operator which is the expressions of an physical quantity in quantum theory and \(a\) denotes the value which is the expression of the physical quantity in classical theory.

Since the mutual orthogonal projection for the combination of non-commutative observables cannot be defined, we cannot naively extend the  method above in order to find a joint-probability distribution of multiple observables or to quantize a physical quantity described as a multivariable function of observables.
In fact, there do not exist any joint-probability distributions of non-commutative observables that satisfy the axioms of the probability:
\begin{align}
\label{axioms of probability}
&(\mathrm{i}) \;\;\;^\forall E\in \mathfrak{B},\;P(E)\in\mathbb{R}_{\geq0},\;\;\;\mbox{where}\;\mathfrak{B}\;\mbox{is a}\;\sigma\mbox{-algebre of the sample space}\;\Omega;\notag\\
&(\mathrm{ii}) \;\;\;P(\Omega) = 1;\\
&(\mathrm{iii}) \;\;\;P\Bigl(\bigcup_{i=1}^\infty E_i\Bigr) = \sum_{i=1}^\infty P(E_i) \;\;\;\mathrm{for\;disjoint\;sets}\;\{E_i\in\mathfrak{B}\}.\notag
\end{align}
However, it is known that we can define a ``quasi-joint-probability distribution", which satisfies only (ii) and (iii) of the axioms of the probability by extending the method above.

As a special case of functional calculus (\ref{functional calculus}), the quantization of \(e^{-isA}\) is given by
\begin{equation}
e^{-isA} = \int_{-\infty}^\infty e^{-isa} E_A(a)da.
\end{equation}
This formula is also understood as the Fourier transform \(\mathcal{F}\) of the projection operator:
\begin{equation}
e^{-isA} = \sqrt{2\pi}\mathcal{F}\bigl(E_A(a)\bigr)(s).
\end{equation}
Therefore, we can conversely consider the orthogonal projection as the inverse Fourier transform of \(e^{-is\hat{A}}\):
\begin{equation}
E_A(a)=\frac{1}{\sqrt{2\pi}}\mathcal{F}^{-1}\bigl(e^{-isA}\bigl)(a).
\end{equation}

As an extension of this formalism, we define the \(hashed\;operator\) \cite{Lee2018}:
\begin{align*}
\label{hashed operator}
\hat{\#}_{\bm{A}}(\bm{s})\coloneqq \{&\mbox{a suitable `mixture' of the `disintegrated' components of}\\
&\mbox{the unitary groups}\;e^{-is\hat{A}_1},...,e^{-is\hat{A}_n}\},\;\;\bm{s}\in\mathbb{R}^n,\;\bm{A}=(A_1,...,A_n)
\stepcounter{equation}\tag{\theequation}
\end{align*}
and the \textit{quasi-joint-spectral distribution} (QJSD) as the inverse Fourier transform of the hashed operator \cite{Lee2018}:
\begin{equation}
\label{QJSD}
\#_{\bm{A}}(\bm{a})\coloneqq\frac{1}{\sqrt{2\pi}^n}\mathcal{F}^{-1}\Bigl(\hat{\#}_{\bm{A}}(\bm{s})\Bigr)(\bm{a}).
\end{equation}
Note that \(\bm{A}\), thus also \(\hat{\bm{A}}\) and \(\bm{a}\), are now multivariable vectors.

The important point here is that the definition of the hashed operator, and thus that of quasi-joint-spectrum distribution, has arbitrariness.
In the case \(\bm{A}=(A,\;B)\), candidates of the hashed operator include
\begin{align}
\label{hashed operatorの例} 
&\hat{\#}^\mathrm{K}_{(A,B)}(s,t) = e^{-isA}e^{-itB},\\
&\hat{\#}^\mathrm{S_\alpha}_{(A,B)}(s,t) = e^{-i\alpha sA}e^{-itB}e^{-i(1-\alpha)sA},\\
&\hat{\#}^\mathrm{M_\alpha}_{(A,B)}(s,t) = \frac{1+\alpha}{2}e^{-isA}e^{-it\hat{B}}+\frac{1-\alpha}{2}e^{-itB}e^{-isA},\\
&\hat{\#}^\mathrm{W}_{(A,B)}(s,t) = e^{-i(sA+tB)} = \lim_{N \to \infty}(e^{-i\frac{s}{N}A}e^{-i\frac{t}{N}B})^N,\\
&\hat{\#}^\mathrm{B}_{(A,B)}(s,t) = \frac{1}{2}\int_{-1}^1 e^{-i\frac{1-k}{2}sA} e^{-itB} e^{-i\frac{1+k}{2}sA} \,\mathrm{d}k,
\end{align}
where K, W, M and B in the superscripts refer to the Kirkwood-Dirac, Wigner, Margenau-Hill and Born-Jordan respectively, while S does not stand for a specific name.
By using quasi-joint-spectrum distributions, the \textit{quasi-joint-probability distribution} is defined analogously to the Born rule \cite{Lee2018}:
\begin{equation}
\label{quasi-joint-probability distributions}
P_{\rho}(\bm{a}) = \mathrm{Tr}[\#_{\bm{A}}(\bm{a})\,\rho\,].
\end{equation}
In particular, the quasi-joint-probability distribution with respect to a pure state is calculated as
\begin{equation}
\label{純粋状態}
P_{\psi}(\bm{a}) = \bra{\psi}\#_{\bm{A}}(\bm{a})\ket{\psi}.
\end{equation}
Quantizations of a physical quantity \(f(\bm{A})\), which is expressed as a multivariable function of \(\bm{A}\), is also derived from QJSD \cite{Lee2018} as in
\begin{equation}
\label{quantization}
f(\bm{A}) = \int_{\mathbb{R}^n} f(\bm{a})\; \#_{\bm{A}}(\bm{a})\;\mathrm{d}^n\bm{a}.
\end{equation}

Since the quasi-joint-probability distribution and the quantization of a physical quantity are defined by using QJSDs, they also have arbitrariness in its definition.
The quasi-joint-probability distribution generated from the hashed operator of the specific form
\begin{equation}
\label{Kirkwood-Dirac distribution の hashed operator}
\hat{\#}_{\bm{A}}^{\mathrm{K}}(\bm{s}) \coloneqq \prod_{k=1}^n e^{-is_kA_k}
\end{equation}
is known as the Kirkwood-Dirac distribution \cite{Lee2018,Kirkwood1933,Dirac1945}:
\begin{equation}
\label{Kirkwood-Dirac distribution}
K_{\rho}(\bm{a}) \coloneqq \mathrm{Tr}[\#^\mathrm{K}_{\bm{A}}(\bm{a})\,\rho\,].
\end{equation}
The quasi-joint-probability distribution generated from the hashed operator of the form
\begin{equation}
\label{Wigner function の hashed operator}
\hat{\#}_{\bm{A}}^{\mathrm{W}}(\bm{s}) \coloneqq e^{-i\sum_{k=1}^n s_kA_k}
\end{equation}
is a generalization of the Wigner function \cite{Lee2018,Wigner1932,Weyl1927}:
\begin{equation}
\label{Wigner function}
W_{\rho}(\bm{a}) \coloneqq \mathrm{Tr}[\#^\mathrm{W}_{\bm{A}}(\bm{a})\,\rho\,].
\end{equation}
Other quasi-joint-probability distributions are also understood as special cases of this formalism.

Although they have arbitrariness in their definition, there are conditions that quasi-joint-probability distributions must satisfy except for (ii) and (iii) of the axioms of the probability.
The marginal probability distribution of \(A_k\) calculated as
\begin{equation}
\label{marginal probability distributions}
P^{A_k}_\rho(a_k)\coloneqq
\int_{\mathbb{R}^{n-1}}P^{\bm{A}}_\rho(\bm{a})\;\mathrm{d}a_1...\mathrm{d}a_{k-1}\mathrm{d}a_{k+1}...\mathrm{d}a_n
\;\;\;(k = 1,2,...,n)
\end{equation}
should express the probability that the observable \(A_k\) takes \(a_k\) in the state \(\rho\), but it is something that can be computed from the standard quantum theory with no arbitrariness:
\begin{equation}
\label{probability of one observable}
P^{A_k}_\rho(a_k) = \mathrm{Tr}[E_{A_k}(a_k)\,\rho\,].
\end{equation}
Therefore the marginal probability distributions (\ref{marginal probability distributions}) must be consistent with Eq.~(\ref{probability of one observable}).
Thanks to the definitions (\ref{hashed operator}), (\ref{QJSD}) and (\ref{quasi-joint-probability distributions}), any quasi-joint-probability distributions naturally fulfill this condition.
Note that distributions which do not satisfy this condition, such as the Husimi function \cite{Husimi1940}, are not included in this definition of quasi-joint-probability distribution, though it can also be understood as an extension of this formalism\cite{Lee2018}.

The duality of quantization and ``quasi-classicalization", which is the map from a density operator to a quasi-joint-probability distribution, is easily understood from the equivalence between two ways of computing the quasi-expectation-value of a physical quantity \(f(\bm{A})\) in the state \(\rho\); see the diagram in Fig.~\ref{双対関係　図}.
One way is to quasi-classicalize the state (that is, to map the state \(\hat{\rho}\) to the quasi-joint-probability distribution \(P_{\rho}(\bm{a})\)) and to compute the expectation value in classical theory as in
\begin{equation}
\label{期待値：古典}
\langle f(\bm{A}) \rangle = \int_{\mathbb{R}^n} f(\bm{a}) P_\rho(\bm{a}) \mathrm{d}^n\bm{a}.
\end{equation}
The other way is to quantize the physical quantity and to compute the expectation value in quantum theory as in
\begin{equation}
\label{期待値：量子}
\langle f(\bm{A}) \rangle = \mathrm{Tr}[f(\bm{A})\,\rho\,].
\end{equation}
The fact that both Eqs.~(\ref{期待値：古典}) and (\ref{期待値：量子}) give the same quasi-expectation value implies the duality of quasi-classicalization and quantization.
\begin{figure}[H]
\centering
\includegraphics[width=0.5\textwidth]{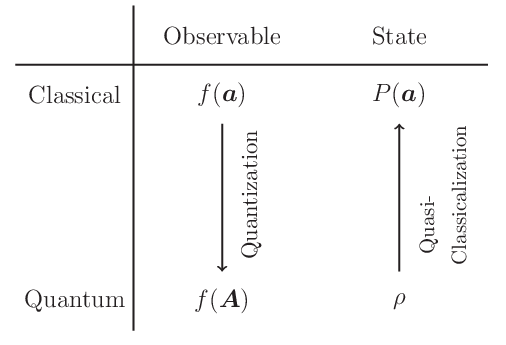}
\caption{The duality of quasi-classicalization and quantization.}
\label{双対関係　図}
\end{figure}

\subsection{Quasi-classicalizations generating real quasi-probabilities}
\label{Quasi-Classicalizations generating Real quasi-probabilities}
Though a quasi-joint-probability distribution generally takes complex values contrary to the regular probability distributions, some quasi-joint-probability distributions stay real.
In this subsection, we will look at which quasi-classicalization generates a quasi-joint-probability distributions that takes only real values.

Since the density operator is generally Hermitian, the statement that a quasi-joint-probability distribution takes real values with respect to arbitrary states, which we refer to as a real quasi-classicalization, is equivalent to the statement that the corresponding QJSD is Hermitian
\begin{equation}
\label{real QJSD}
^\forall\rho,\;P^\mathrm{T}_{\rho}(\bm{a})\in\mathbb{R} \iff \#^\mathrm{T}_{\bm{A}}(\bm{a}) = \#^{\mathrm{T}\dagger}_{\bm{A}}(\bm{a}),
\end{equation}
where \(\mathrm{T}=\mathrm{K,\,W}\) and so on.
Since the Hermitian conjugate of a QJSD is expressed as 
\begin{align*}
\#^{\mathrm{T}\dagger}_{\bm{A}}(\bm{a})
&=\frac{1}{\sqrt{2\pi}^n}\Bigl(\mathcal{F}^{-1}\bigl(\hat{\#}^{\mathrm{T}}_{\bm{A}}(\bm{s})\bigr)(\bm{a})\Bigr)^\dagger\\
&=\frac{1}{(2\pi)^n}\int_{\mathbb{R}^n}\hat{\#}^{\mathrm{T}\dagger}_{\bm{A}}(\bm{s})e^{-i\bm{s}\cdot\bm{a}}\;\mathrm{d}^n\bm{s}\\
&=\frac{1}{(2\pi)^n}\int_{\mathbb{R}^n}\hat{\#}^{\mathrm{T}\dagger}_{\bm{A}}(-\bm{s})e^{i\bm{s}\cdot\bm{a}}\;\mathrm{d}^n\bm{s}\\
&=\frac{1}{\sqrt{2\pi}^n}\mathcal{F}^{-1}\bigl(\hat{\#}^{\mathrm{T}\dagger}_{\bm{A}}(-\bm{s})\bigr)(\bm{a}),
\stepcounter{equation}\tag{\theequation}
\end{align*}
the reality of a quasi-classicalization can be also expressed as the following relation of hashed operators \cite{Lee2018}:
\begin{equation}
\label{real hashed operator}
^\forall\rho,\;P^\mathrm{T}_{\rho}(\bm{a})\in\mathbb{R} 
\iff \#^\mathrm{T}_{\bm{A}}(\bm{a}) = \#^{\mathrm{T}\dagger}_{\bm{A}}(\bm{a}) 
\iff \hat{\#}^{\mathrm{T}}_{\bm{A}}(\bm{s}) = \hat{\#}^{\mathrm{T}\dagger}_{\bm{A}}(-\bm{s}).
\end{equation}
Therefore, we can see that the Wigner distribution (\ref{Wigner function}) always takes real values because Eq.~(\ref{Wigner function の hashed operator}) satisfies the right-most condition of Eq.~(\ref{real hashed operator}), but the Kirkwood-Dirac distribution (\ref{Kirkwood-Dirac distribution}) can take complex values because Eq.~(\ref{Kirkwood-Dirac distribution の hashed operator}) does not satisfy it. 

Real quantization, a quantization corresponding to a real quasi-classicalization, is also meaningful. 
A real quantization quantizes every physical quantity to a Hermitian operators \cite{Lee2018} because
\begin{align}
\Bigl(f^\mathrm{T}(\bm{A})\Bigr)^\dagger 
&=\int_{\mathbb{R}^n} f(\bm{a})\; \#^{\mathrm{T}\dagger}_{\bm{A}}(\bm{a})\;\mathrm{d}^n\bm{a}.\\
&=f^\mathrm{T}(\hat{\bm{A}})
\end{align}
if \(\#^{\mathrm{T}}_{\bm{A}}(\bm{a})\) is a real quantization.

\subsection{Finite-state quantum systems and possible values of observables}
\label{Finit-State Quantum System and Possible Values of Observables}
In quantum theory, the value of an observable is restricted to its eigenvalues. 
Therefore, the value that the observable can take is discretized when we consider a \textit{finite-state quantum system}.
Here, we refer to a quantum system whose state space is a finite-dimensional Hilbert space as a \textit{finite-state quantum system}.
Therefore, the probability distribution obtained from the Born rule (\ref{Born rule}) should take non-zero values only at discrete eigenvalues of the observable in question.
Mathematically speaking, it means that the probability distribution for \(N\)-state quantum systems should be expressed as a linear combination of several delta functions
\begin{equation}
\label{discritization of probability}
P^A_\rho(a)= \sum_{k=1}^n c_k\delta(a-\alpha_k),
\end{equation}
where \(n\leq N\) and \(c_1,...c_n\in\mathbb{R}_{\geq0}\) with \(\sum_{k=1}^n c_k = 1\).
Here, \(\alpha_k\,(k = 1,2,...,n)\) denotes each eigenvalue of the observable \(A\).

However, it is not the case for quasi-joint-probability distributions. Marginal probability distributions obtained from Eq.~(\ref{marginal probability distributions}) should satisfy the conditions (\ref{discritization of probability}), but they are the only restrictions.
Here, we define \((\alpha_{1k_{1}},\alpha_{2k_{2}},...,\alpha_{mk_{m}})\) as \textit{possible values} of observables \(\bm{A}=(A_1,A_2,...,A_m)\) if each \(\alpha_{ik_{i}}\) is an eigenvalue of an observable \(A_i\).
Though we expect the quasi-joint-probability distribution of \(\bm{A}\) to take non-zero values only at the possible value of \(\bm{A}\) in the same way as the genuine probability distribution, not every types of quasi-joint-probability distribution do so as we will see later.
In general, quasi-joint-probability distribution are permitted to take non-zero values everywhere.

\section{Quasi-Probabilities for Two-State Quantum System}
\label{Quasi-Probabilities for Two-State Quantum System}
As seen above, the quasi-joint-probability distribution is determined if we determine what observables to consider, how to quasi-classicalize and the state of the system.
In this section, we will consider quasi-joint-probability distributions for a two-state quantum system, the state space of which is described by a two-dimensional Hilbert space.
At first, we consider the \(x\) and \(y\) components of spin \(1/2\) and look at seven examples.
Through these examples, we will see that some types of quasi-joint-probability distributions behave nicely while others do not depending on the way of quasi-classicalization.
We then generally prove nice features of the Kirkwood-Dirac distributions of spin \(1/2\) and see the reason why the other types of quasi-joint-probability distributions do not show such features.
At the end, we will see that these features of the Kirkwood-Dirac distribution are not limited to the distribution of the \(x\) and \(y\) components of spin.

\subsection{Spin \(1/2\)}
\label{Spin 1/2}
The spin operator \(\{J_i\}\,(i = 1,2,3)\) is generally defined by the commutation relations
\begin{equation}
\label{SU(2)}
[J_i, J_j] =  i\epsilon_{ijk}J_k
\end{equation}
and are understood as a basis set of the Lie algebra \(\mathfrak{su}(2)\).
Here, we use the natural unit \(\hbar=1\).
Since observables are described as \(N\)-dimensional Hermitian matrices for an \(N\)-state quantum system, we look at the representation of \(\mathfrak{s}\mathfrak{u}(2)\) in order to consider the spin in a quantum system.
The \(N\)-dimensional irreducible representation of \(\mathfrak{su}(2)\) is known as the spin \((N-1)/2\) representation.
In spin \(j\) representation, each of the spin operators \(\{J_i\}\) is expressed as a matrix whose eigenvalues are \(j,j-1,...,-j\) and the magnitude of spin \(\bm{J}^2\) becomes \(\mathfrak{su}(2)\)'s Casimir operator
\begin{equation}
\label{J^2}
\bm{J}^2 \coloneqq {J_1}^2 + {J_2}^2 + {J_3}^2  = j(j+1)I.
\end{equation}
For \(N=2\), the irreducible representation of \(\mathfrak{su}(2)\) is the spin \(1/2\) representation. 
In this case, the spin operator \(\{J_i\}\) is particularly described as
\begin{equation}
J_i = \frac{\sigma_i}{2}
\end{equation}
by using the Pauli matrices
\begin{equation}
\label{パウリ行列}
\sigma_1
=\begin{pmatrix}
   0 & 1 \\
   1 & 0
\end{pmatrix},\;
\sigma_2
=\begin{pmatrix}
   0 & -i \\
   i & 0
\end{pmatrix},\;
\sigma_3
=\begin{pmatrix}
   1 & 0 \\
   0 & -1
\end{pmatrix}.
\end{equation}
As discussed in Subsec.~\ref{Finit-State Quantum System and Possible Values of Observables}, the value that an observable of a finite-dimensional quantum system can take is discritized to its eigenvalues.
In the case of the spin of the two-state quantum system, these values are the eigenvalues of the spin operators of the spin \(1/2\) representation, namely \(1/2\) and \(-1/2\).

The states of the two-state quantum systems are described not only by the two-dimensional density matrix, but by the combination of the expectation values in the three directions of the spin, \((\langle J_1\rangle,\,\langle J_2\rangle,\,\langle J_3\rangle)\).
More specifically, any state can be expressed uniquely as a point in or on the Bloch sphere as shown in Fig.~\ref{ブロッホ球　図}.
\begin{figure}[H]
\centering
\includegraphics[width=0.3\textwidth]{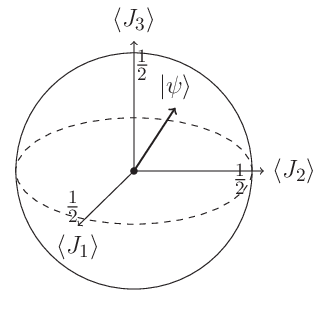}
\caption{Bloch sphere. Any states of spin \(1/2\) is specified as a point \((\langle J_1\rangle,\,\langle J_2\rangle,\,\langle J_3\rangle)\).}
\label{ブロッホ球　図}
\end{figure}

\subsection{Examples}
\label{Examples}
In this subsection, we look at examples of quasi-joint-probability distributions of the \(x\) and \(y\) components of spin \(1/2\) for two-state quantum systems.
First, the exponential functions of \(J_1\) and \(J_2\) of spin \(1/2\) are given by
\begin{equation}
\label{J1の指数}
e^{-isJ_1}
=\begin{pmatrix}
    \cos \frac{s}{2}   & -i\sin \frac{s}{2}\\
    -i\sin \frac{s}{2} & \cos \frac{s}{2}
 \end{pmatrix},
\end{equation}
\begin{equation}
\label{J2の指数}
e^{-itJ_2}
=\begin{pmatrix}
    \cos \frac{t}{2} & -\sin \frac{t}{2}\\
    \sin \frac{t}{2} & \cos \frac{t}{2}
 \end{pmatrix}.
\end{equation}
Then, some of the hashed operators are found as follows:
\begin{equation}
\label{＃K}
\hat{\#}^{\mathrm{K}}_{J_1,J_2}(s,t) 
\coloneqq e^{-isJ_1}e^{-itJ_2}=
\begin{pmatrix}
    \cos \frac{s}{2} \cos \frac{t}{2} -i\sin \frac{s}{2} \sin \frac{t}{2} 
    & -\cos \frac{s}{2} \sin \frac{t}{2} -i\sin \frac{s}{2} \cos \frac{t}{2}\\
    \cos \frac{s}{2} \sin \frac{t}{2} -i\sin \frac{s}{2} \cos \frac{t}{2} 
    & \cos \frac{s}{2} \cos \frac{t}{2} +i\sin \frac{s}{2} \sin \frac{t}{2}
\end{pmatrix},
\end{equation}
\begin{equation}
\label{＃S}
\hat{\#}^{\mathrm{S}_{1/2}}_{J_1,J_2}(s,t) 
\coloneqq e^{-i\frac{s}{2}J_1} e^{-itJ_2} e^{-i\frac{s}{2}J_1}
=\begin{pmatrix}
   \cos \frac{s}{2} \cos \frac{t}{2} 
   & -\sin \frac{t}{2} -i\sin \frac{s}{2} \cos \frac{t}{2}\\
   \sin \frac{t}{2} -i\sin \frac{s}{2} \cos \frac{t}{2} 
   & \cos \frac{s}{2} \cos \frac{t}{2}
 \end{pmatrix},
\end{equation}
\begin{align*}
\label{＃M}
\hat{\#}^{\mathrm{M}_0}_{J_1,J_2}(s,t) 
&\coloneqq \frac{1}{2} (e^{-isJ_1}e^{-itJ_2} + e^{-itJ_2}e^{-isJ_1})\\
&=
\begin{pmatrix}
\cos \frac{s}{2} \cos \frac{t}{2} & -\cos \frac{s}{2} \sin \frac{t}{2} -i\sin \frac{s}{2} \cos \frac{t}{2}\\
\cos \frac{s}{2} \sin \frac{t}{2} -i\sin \frac{s}{2} \cos \frac{t}{2} & \cos \frac{s}{2} \cos \frac{t}{2}
\end{pmatrix},
\stepcounter{equation}\tag{\theequation}
\end{align*}
\begin{align*}
\label{＃W}
\hat{\#}^\mathrm{W}_{J_1,J_2}(s,t) 
&\coloneqq e^{-i(sJ_1+tJ_2)}\\
&=\begin{pmatrix}
    \cos \sqrt{(\frac{s}{2})^2+(\frac{t}{2})^2} 
    & -i\sqrt{\frac{(\frac{s}{2})-i(\frac{t}{2})}{(\frac{s}{2})+i(\frac{t}{2})}} \sin \sqrt{(\frac{s}{2})^2+(\frac{t}{2})^2}\\
    -i\sqrt{\frac{(\frac{s}{2})+i(\frac{t}{2})}{(\frac{s}{2})-i(\frac{t}{2})}} \sin \sqrt{(\frac{s}{2})^2+(\frac{t}{2})^2} 
    & \cos \sqrt{(\frac{s}{2})^2+(\frac{t}{2})^2}
 \end{pmatrix},
\stepcounter{equation}\tag{\theequation}
\end{align*}
\begin{align*}
\label{＃B}
\hat{\#}^\mathrm{B}_{J_1,J_2}(s,t)
&\coloneqq \frac{1}{2}\int_{-1}^1 e^{-i\frac{1-k}{2}sJ_1} e^{-itJ_2} e^{-i\frac{1+k}{2}sJ_1} \,\mathrm{d}k\\
&=
\begin{pmatrix}
\cos \frac{s}{2} \cos \frac{t}{2} & \frac{\sin \frac{s}{2} }{\frac{s}{2}} \sin \frac{t}{2} -i\sin \frac{s}{2} \cos \frac{t}{2}\\
-\frac{\sin \frac{s}{2} }{\frac{s}{2}} \sin \frac{t}{2} -i\sin \frac{s}{2} \cos \frac{t}{2} & \cos \frac{s}{2} \cos \frac{t}{2}
\end{pmatrix}.
\stepcounter{equation}\tag{\theequation}
\end{align*}
By using these expressions, we can compute each type of quasi-joint-probability distributions with respect to an arbitrary state.

For example, using Eq.~(\ref{＃K}) the Kirkwood-Dirac distribution with respect to the \(\ket{z+}\) state, which is the eigenstate of the eigenvalue \(1/2\) of \(J_3\), is given by
\begin{alignat}{2}
\label{Kz+}
K_{z+}(x,y) 
&= \frac{1}{2\pi} \mathscr{F}^{-1}\Bigl(\bra{z+} \hat{\#}^K_{J_1,J_2} \ket{z+}\Bigr)(x,y)\notag\\
&= \frac{1}{2\pi} \mathscr{F}^{-1}\qty(
\begin{pmatrix}
1 & 0
\end{pmatrix}
\begin{pmatrix}
    \cos \frac{s}{2} \cos \frac{t}{2} -i\sin \frac{s}{2} \sin \frac{t}{2} 
    & -\cos \frac{s}{2} \sin \frac{t}{2} -i\sin \frac{s}{2} \cos \frac{t}{2}\\
    \cos \frac{s}{2} \sin \frac{t}{2} -i\sin \frac{s}{2} \cos \frac{t}{2} 
    & \cos \frac{s}{2} \cos \frac{t}{2} +i\sin \frac{s}{2} \sin \frac{t}{2}
\end{pmatrix}
\begin{pmatrix}
1\\
0
\end{pmatrix})\notag\\
&= \frac{1}{4}\qty[\delta\qty(x-\frac{1}{2}) + \delta\qty(x+\frac{1}{2})]\qty[\delta\qty(y-\frac{1}{2}) + \delta\qty(y+\frac{1}{2})]\notag\\
&\;\;\;\;\;\;\;\;\;\;\;\;\;\;\;+ \frac{i}{4}\qty[\delta\qty(x-\frac{1}{2}) - \delta\qty(x+\frac{1}{2})]\qty[\delta\qty(y-\frac{1}{2}) - \delta\qty(y+\frac{1}{2})];
\end{alignat}
see Fig.~\ref{z+-状態のKirkwood-Dirac分布　図} (a) and (b).
In the same way, the one for the \(\ket{z-}\) state, the eigenstate with eigenvalue \(-1/2\) of \(\hat{J}_3\), is obtained in the form
\begin{align*}
\label{Kz-}
K_{z-}(x,y) 
&= \frac{1}{2\pi} \mathscr{F}^{-1}\qty(
\begin{pmatrix}
0 & 1
\end{pmatrix}
\begin{pmatrix}
    \cos \frac{s}{2} \cos \frac{t}{2} -i\sin \frac{s}{2} \sin \frac{t}{2} 
    & -\cos \frac{s}{2} \sin \frac{t}{2} -i\sin \frac{s}{2} \cos \frac{t}{2}\\
    \cos \frac{s}{2} \sin \frac{t}{2} -i\sin \frac{s}{2} \cos \frac{t}{2} 
    & \cos \frac{s}{2} \cos \frac{t}{2} +i\sin \frac{s}{2} \sin \frac{t}{2}
\end{pmatrix}
\begin{pmatrix}
0\\
1
\end{pmatrix})\\
&= \frac{1}{4}\qty[\delta\qty(x-\frac{1}{2}) + \delta\qty(x+\frac{1}{2})]\qty[\delta\qty(y-\frac{1}{2}) + \delta\qty(y+\frac{1}{2})]\notag\\
&\;\;\;\;\;\;\;\;\;\;\;\;\;\;\;- \frac{i}{4}\qty[\delta\qty(x-\frac{1}{2}) - \delta\qty(x+\frac{1}{2})]\qty[\delta\qty(y-\frac{1}{2}) - \delta\qty(y+\frac{1}{2})];
\stepcounter{equation}\tag{\theequation}
\end{align*}
see Fig.~\ref{z+-状態のKirkwood-Dirac分布　図} (c) and (d).
We can see that the Kirkwood-Dirac distribution  with respect to \(\ket{z\pm}\) takes complex numbers and its real part is common to the \(\ket{z\pm}\) states while its imaginary part is different for them.
\begin{figure}[H]
\centering
\includegraphics[width=0.8\textwidth]{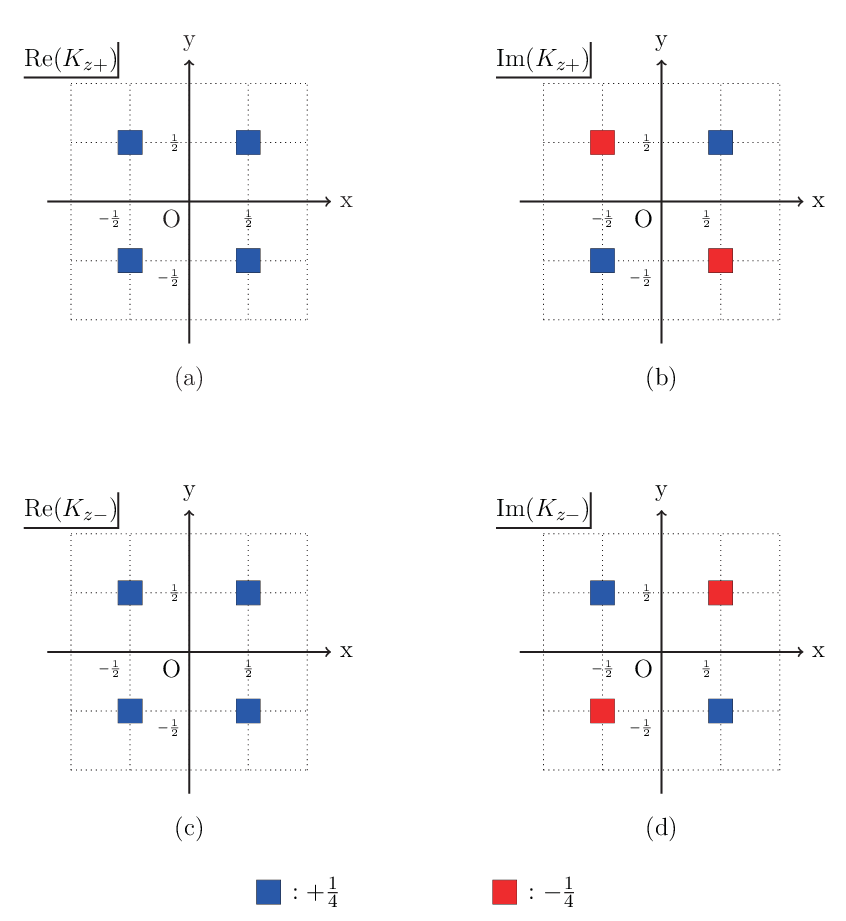}
\caption{The real and imaginary parts of the Kirkwood-Dirac distribution for the \(\ket{z+}\) and \(\ket{z-}\) states. The blue square indicates the value \(+1/4\), while the red square indicates the value \(-1/4\).}
\label{z+-状態のKirkwood-Dirac分布　図}
\end{figure}
On the other hand, the quasi-joint-probability distributions with respect to \(\ket{z+}\) and \(\ket{z-}\) produced by \(\hat{\#}^{S_{1/2}}_{J_1,J_2}(s,t) \) in Eq.~(\ref{＃S}) are respectively given by
\begin{equation}
\label{Sz+}
S_{z+}(x,y)=\frac{1}{4}\qty[\delta\qty(x-\frac{1}{2}) + \delta\qty(x+\frac{1}{2})]\qty[\delta\qty(y-\frac{1}{2}) + \delta\qty(y+\frac{1}{2})],
\end{equation}
\begin{equation}
\label{Sz-}
S_{z-}(x,y)=\frac{1}{4}\qty[\delta\qty(x-\frac{1}{2}) + \delta\qty(x+\frac{1}{2})]\qty[\delta\qty(y-\frac{1}{2}) + \delta\qty(y+\frac{1}{2})];
\end{equation}
see Fig.~\ref{z+-状態のS関数　図}.
Both distributions with respect to \(\ket{z+}\) and \(\ket{z-}\) states are real and they are equal to each other.
While we can distinguish the \(\ket{z\pm}\) states by the Kirkwood-Dirac distribution as in Fig.~\ref{z+-状態のKirkwood-Dirac分布　図}, we cannot by the quasi-joint-probability distribution that generated from \(\hat{\#}^{S_{1/2}}_{J_1,J_2}(s,t)\) as in Fig.~\ref{z+-状態のS関数　図}.
Whether we can distinguish the state by a quasi-joint-probability distribution depends on the choice of quasi-joint-spectrum distribution, which determines how to quasi-classicalize.
\begin{figure}[H]
\centering
\includegraphics[width=0.8\textwidth]{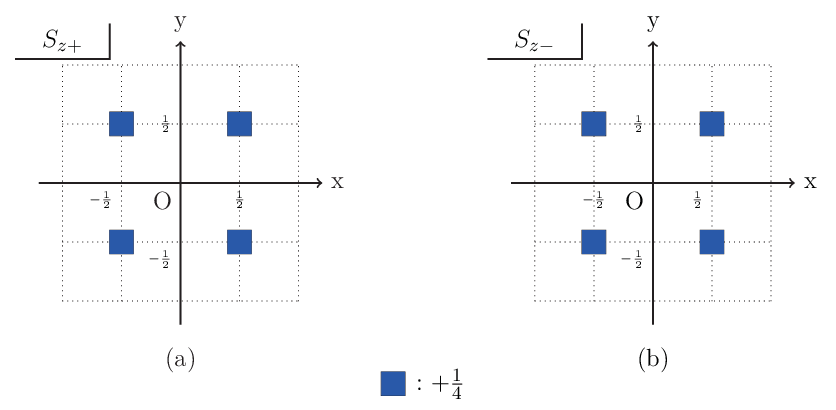}
\caption{Quasi-joint-probability distributions generated from \(\hat{\#}^{S_{1/2}}_{J_1,J_2}(s,t)\) with respect to (a) the \(\ket{z+}\) state and (b) the \(\ket{z-}\) state. Each blue square indicates the value \(+1/4\)}
\label{z+-状態のS関数　図}
\end{figure}
Let us next look at the \(\ket{y+}\) state, the eigenstate of the eigenvalue \(1/2\) of \(J_2\).
The Kirkwood-Dirac distribution and the quasi-joint-probability distribution generated from \(\hat{\#}^{S_{1/2}}_{J_1,J_2}(s,t)\) are respectively given by
\begin{align*}
\label{Ky+}
& K_{y+}(x,y)\\
&= \frac{1}{2\pi} \mathscr{F}^{-1}\qty(
\begin{pmatrix}
\frac{1}{\sqrt{2}} & -\frac{i}{\sqrt{2}}
\end{pmatrix}
\begin{pmatrix}
    \cos \frac{s}{2} \cos \frac{t}{2} -i\sin \frac{s}{2} \sin \frac{t}{2} 
    & -\cos \frac{s}{2} \sin \frac{t}{2} -i\sin \frac{s}{2} \cos \frac{t}{2}\\
    \cos \frac{s}{2} \sin \frac{t}{2} -i\sin \frac{s}{2} \cos \frac{t}{2} 
    & \cos \frac{s}{2} \cos \frac{t}{2} +i\sin \frac{s}{2} \sin \frac{t}{2}
\end{pmatrix}
\begin{pmatrix}
\frac{1}{\sqrt{2}}\\
\frac{i}{\sqrt{2}}
\end{pmatrix})\\
&= \frac{1}{2}\qty[\delta\qty(x-\frac{1}{2}) + \delta\qty(x+\frac{1}{2})]\delta\qty(y-\frac{1}{2}),
\stepcounter{equation}\tag{\theequation}
\end{align*}
\begin{align*}
\label{Sy+}
S_{y+}(x,y)
= &\frac{1}{4}\qty[\delta\qty(x-\frac{1}{2}) + \delta\qty(x+\frac{1}{2})]\qty[\delta\qty(y-\frac{1}{2}) + \delta\qty(y+\frac{1}{2})] \\
&+\frac{1}{2}\delta(x)\qty[\delta\qty(y-\frac{1}{2}) - \delta\qty(y+\frac{1}{2})];
\stepcounter{equation}\tag{\theequation}
\end{align*}
see Fig.~\ref{y+状態のKirkwood-Dirac関数とS関数　図}.
For the \(\ket{y+}\) state, both the Kirkwood-Dirac distribution and the quasi-joint-probability distribution generated from \(\hat{\#}^{S_{1/2}}_{J_1,J_2}(s,t)\) are real, but while the former takes non-zero values only at the pair of possible values of the \(x\) and \(y\) components of the spin \(1/2\), namely \((\pm1/2,\pm1/2)\), the latter takes non-zero values at\((J_1,J_2)=(0,\pm1/2)\) even though the value of \(J_1\) cannot be zero.
\begin{figure}[H]
\centering
\includegraphics[width=0.8\textwidth]{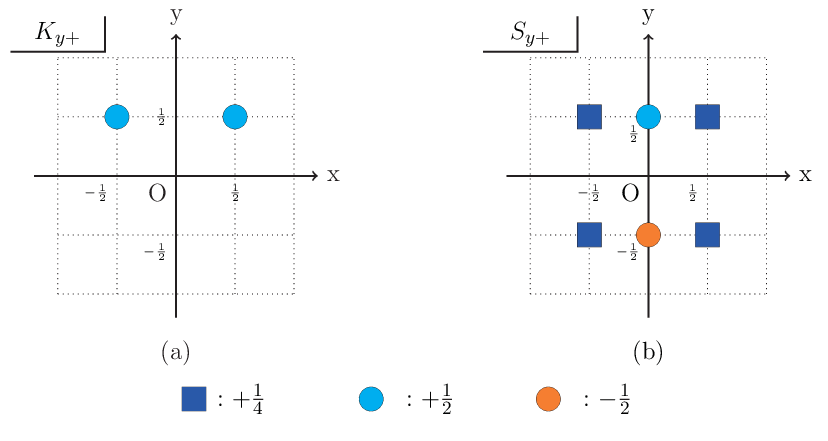}
\caption{(a) the Kirkwood-Dirac distribution and (b) the quasi-joint-probability distribution generated from \(\hat{\#}^{S_{1/2}}_{J_1,J_2}(s,t)\) with respect to the \(\ket{y+}\)  state. The blue square indicates the value \(+1/4\), while the light blue circle indicates the value \(+1/2\) and the orange circle indicates the value \(-1/2\).}
\label{y+状態のKirkwood-Dirac関数とS関数　図}
\end{figure}
Finally, the Wigner distribution, the quasi-joint-probability distribution generated from \(\hat{\#}^W_{J_1,J_2}(s,t)\) in Eq.~(\ref{＃W}), cannot distinguish the \(\ket{z\pm}\) states and even diverge everywhere rather than to take non-zero values only at \((\pm1/2,\,\pm1/2)\):
\begin{align*}
\label{Wz+-}
W_{z\pm}(x,y) 
&= \frac{1}{2\pi} \mathscr{F}^{-1}\Bigl(\bra{z\pm} \hat{\#}^W_{J_1,J_2} \ket{z\pm}\Bigr)(x,y)\\
&= \frac{1}{2\pi} \int_{-\infty}^\infty \int_{-\infty}^\infty  \cos \sqrt{(\frac{s}{2})^2+(\frac{t}{2})^2}e^{isx} e^{ity} \,\frac{\mathrm{d}s \,\mathrm{d}t}{2\pi}\\
&= \frac{1}{(2\pi)^2} \int_{0}^\infty \int_{0}^{2\pi} \cos \frac{r}{2}\; e^{ixr\cos\theta}\; e^{iyr\sin\theta}\; r\, \mathrm{d}\theta\, \mathrm{d}r\\
&= \frac{1}{2\pi} \int_{0}^\infty r \cos \frac{r}{2}\; J_0\bigl(r\sqrt{x^2+y^2}\bigr)\;  \mathrm{d}r\\
&\simeq \frac{1}{\sqrt{2\pi^3}} \int_0^\infty \sqrt{r'} \cos \frac{r'}{2\sqrt{x^2+y^2}}\; \cos(r'-\pi/4)\; \mathrm{d}r',
\stepcounter{equation}\tag{\theequation}
\end{align*}
which can oscillate rapidly between \(+\infty\) and \(-\infty\).
Through the observation of these examples of quasi-joint-probability distributions, we can see that the behavior is substantially different for each quasi-joint probability distribution.
We should look for a quasi-joint-spectrum distribution in order to obtain a quasi-joint-probability distribution that behaves well.

\subsection{General features of Kirkwood-Dirac distributions of spin \(1/2\)}
\label{General Features of Kirkwood-Dirac Distributions of Spin 1/2}
As was exemplified in Subsec.~\ref{Examples}, the Kirkwood-Dirac distribution behaves nicely for the two-state quantum system. 
In the present subsection, we prove three nice features of the Kirkwood-Dirac distributions of the \(x\) and \(y\) components of the spin \(1/2\) for the two-state quantum systems.

First, we look at the possible values of observables.
\begin{prop}
\label{取り得る値　２次元}
The Kirkwood-Dirac distribution of the \(x\) and \(y\) components of the spin \(1/2\) with respect to an arbitrary state of the two-state quantum systems takes non-zero values only at \((J_1,J_2)=(\pm1/2,\pm1/2)\):
\begin{equation}
^\forall\rho,\; K_{\rho}(x,y) = 0\;\;\bigl(\mbox{for}\;(x,y)\neq(\pm1/2,\pm1/2)\bigl).
\end{equation}
\end{prop}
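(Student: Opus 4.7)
The plan is to exploit the spectral decomposition of $J_1$ and $J_2$ to reduce the inverse Fourier transform defining $\#^{\mathrm{K}}_{J_1,J_2}(x,y)$ to an elementary finite sum of delta functions. Because each of $J_1$ and $J_2$ has eigenvalues only in $\{+1/2,-1/2\}$ in the spin-$1/2$ representation, the exponentials $e^{-isJ_1}$ and $e^{-itJ_2}$ decompose into finite sums of Fourier modes whose frequencies lie exactly on these eigenvalues, and all that remains is to read off the transform.

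Concretely, I would first write
\[
e^{-isJ_1} = \sum_{\alpha=\pm1/2} e^{-is\alpha}\, E_{J_1}(\alpha), \qquad e^{-itJ_2} = \sum_{\beta=\pm1/2} e^{-it\beta}\, E_{J_2}(\beta),
\]
using the spectral projections onto the eigenspaces. Multiplying these in the order dictated by the Kirkwood-Dirac prescription gives the finite sum
\[
\hat{\#}^{\mathrm{K}}_{J_1,J_2}(s,t) = \sum_{\alpha,\beta\in\{\pm1/2\}} e^{-is\alpha}\, e^{-it\beta}\, E_{J_1}(\alpha)\, E_{J_2}(\beta)
\]
of plane waves with fixed operator coefficients. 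Applying the two-dimensional inverse Fourier transform term by term, each plane wave $e^{-is\alpha - it\beta}$ produces $2\pi\,\delta(x-\alpha)\,\delta(y-\beta)$, and the overall $1/(2\pi)$ prefactor in the QJSD definition collapses this to
\[
\#^{\mathrm{K}}_{J_1,J_2}(x,y) = \sum_{\alpha,\beta\in\{\pm1/2\}} E_{J_1}(\alpha)\, E_{J_2}(\beta)\, \delta(x-\alpha)\, \delta(y-\beta).
\]
Taking the trace against any density operator $\rho$ then yields a distribution supported only on the four points $(\pm 1/2, \pm 1/2)$, which is precisely the claim.

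There is no real obstacle here: the argument is purely algebraic and relies only on the finiteness of the spectra of $J_1$ and $J_2$. The essential point is that the product structure of the Kirkwood-Dirac hashed operator preserves the discrete Fourier content inherited from the individual observables---precisely the property that fails for the Wigner hashed operator $e^{-i(sJ_1+tJ_2)}$, whose generator has eigenvalues $\pm\sqrt{s^2+t^2}/2$ depending continuously on $(s,t)$ and therefore introduces a continuum of frequencies, consistent with the divergent behavior already observed for $W_{z\pm}(x,y)$. The same strategy will generalize verbatim to any finite collection of Hermitian observables with finite spectra, so I expect this proof to serve as a template for the analogous statements in the three-state and general finite-state sections.
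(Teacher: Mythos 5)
Your proof is correct, and it reaches the same essential fact as the paper --- that the Kirkwood--Dirac hashed operator is a finite sum of plane waves $e^{-is\alpha-it\beta}$ with $\alpha,\beta\in\{\pm 1/2\}$, so its inverse Fourier transform is supported on the four points $(\pm 1/2,\pm 1/2)$ --- but by a different decomposition. The paper proves Proposition~\ref{取り得る値　２次元} by writing out the explicit $2\times 2$ matrix of $e^{-isJ_1}e^{-itJ_2}$ in terms of $\cos(s/2)$, $\sin(s/2)$, etc.\ (Eq.~(\ref{＃K})) and then Fourier-transforming each of the four entries separately to obtain Eqs.~(\ref{＃K11})--(\ref{＃K22}), observing that every entry is a linear combination of $\delta(x\pm 1/2)\,\delta(y\pm 1/2)$. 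You instead insert the spectral decompositions $e^{-isJ_1}=\sum_{\alpha}e^{-is\alpha}E_{J_1}(\alpha)$ and $e^{-itJ_2}=\sum_{\beta}e^{-it\beta}E_{J_2}(\beta)$ and transform term by term, arriving at $\#^{\mathrm{K}}_{J_1,J_2}(x,y)=\sum_{\alpha,\beta}E_{J_1}(\alpha)E_{J_2}(\beta)\,\delta(x-\alpha)\,\delta(y-\beta)$ without ever touching the matrix entries; your bookkeeping of the $1/(2\pi)$ prefactor against the $2\pi$ from each plane wave is consistent with the paper's Fourier conventions. What your route buys is exactly what you anticipate at the end: it is representation-independent and generalizes verbatim to arbitrary observables with finite spectra, and indeed it is essentially the argument the paper itself deploys later for the general $N$-state case in Proposition~\ref{取り得る値　有限次元} (there phrased via diagonalization $A_i=U_i\,\mathrm{diag}(\alpha_{i1},\dots,\alpha_{iN})\,U_i^{\dagger}$ rather than spectral projections). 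What the paper's explicit computation buys in return is that the concrete coefficients (\ref{＃K11})--(\ref{＃K22}) are reused immediately in the proofs of Propositions~\ref{状態を区別できる　２次元スピン} and~\ref{xy平面にいる　２次元}, where the real and imaginary parts of $K_{\pm\pm}$ as functions of the density-matrix entries are needed; your abstract form would require an extra evaluation step to recover them.
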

\begin{proof}
The Kirkwood-Dirac distribution of the \(x\) and \(y\) components of spin \(1/2\) for the two-state quantum systems is a quasi-joint-probability distribution generated from the hashed operator \(\hat{\#}^K_{J_1,J_2}(s,t)\) in Eq.~(\ref{＃K}). 
Quasi-joint-spectrum distribution corresponding to this hashed operator is given by
\begin{align*}
\#^K_{J_1,J_2}(x,y)
&=\frac{1}{2\pi}\mathcal{F}^{-1}\qty(\hat{\#}^K_{J_1,J_2}(s,t))(x,y).\\
&=\frac{1}{2\pi}\mathcal{F}^{-1}\qty(
\begin{pmatrix}
\cos \frac{s}{2} \cos \frac{t}{2} -i\sin \frac{s}{2} \sin \frac{t}{2} 
& -\cos \frac{s}{2} \sin \frac{t}{2} -i\sin \frac{s}{2} \cos \frac{t}{2}\\
\cos \frac{s}{2} \sin \frac{t}{2} -i\sin \frac{s}{2} \cos \frac{t}{2} 
& \cos \frac{s}{2} \cos \frac{t}{2} +i\sin \frac{s}{2} \sin \frac{t}{2}
\end{pmatrix}
)(x,y),
\stepcounter{equation}\tag{\theequation}
\end{align*}
whose components read
\begin{align*}
\label{＃K11}
\bigl(\#^\mathrm{K}_{J_1,J_2}(x,y)\bigr)_{11}
&= \frac{1}{4}\qty[\delta\qty(x-\frac{1}{2}) + \delta\qty(x+\frac{1}{2})]\qty[\delta\qty(y-\frac{1}{2}) + \delta\qty(y+\frac{1}{2})]\\
&\;\;\;\;\;\;\;\;\;\;\;\;\;\;\;+ \frac{i}{4}\qty[\delta\qty(x-\frac{1}{2}) - \delta\qty(x+\frac{1}{2})]\qty[\delta\qty(y-\frac{1}{2}) - \delta\qty(y+\frac{1}{2})],
\stepcounter{equation}\tag{\theequation}
\end{align*}
\begin{align*}
\label{＃K12}
\bigl(\#^\mathrm{K}_{J_1,J_2}(x,y)\bigr)_{12}
&= \frac{1}{4}\qty[\delta\qty(x-\frac{1}{2}) - \delta\qty(x+\frac{1}{2})]\qty[\delta\qty(y-\frac{1}{2}) + \delta\qty(y+\frac{1}{2})]\\
&\;\;\;\;\;\;\;\;\;\;\;\;\;\;\;- \frac{i}{4}\qty[\delta\qty(x-\frac{1}{2}) + \delta\qty(x+\frac{1}{2})]\qty[\delta\qty(y-\frac{1}{2}) - \delta\qty(y+\frac{1}{2})],
\stepcounter{equation}\tag{\theequation}
\end{align*}
\begin{align*}
\label{＃K21}
\bigl(\#^\mathrm{K}_{J_1,J_2}(x,y)\bigr)_{21}
&= \frac{1}{4}\qty[\delta\qty(x-\frac{1}{2}) - \delta\qty(x+\frac{1}{2})]\qty[\delta\qty(y-\frac{1}{2}) + \delta\qty(y+\frac{1}{2})]\\
&\;\;\;\;\;\;\;\;\;\;\;\;\;\;\;+ \frac{i}{4}\qty[\delta\qty(x-\frac{1}{2}) + \delta\qty(x+\frac{1}{2})]\qty[\delta\qty(y-\frac{1}{2}) - \delta\qty(y+\frac{1}{2})],
\stepcounter{equation}\tag{\theequation}
\end{align*}
\begin{align*}
\label{＃K22}
\bigl(\#^\mathrm{K}_{J_1,J_2}(x,y)\bigr)_{22}
&= \frac{1}{4}\qty[\delta\qty(x-\frac{1}{2}) + \delta\qty(x+\frac{1}{2})]\qty[\delta\qty(y-\frac{1}{2}) + \delta\qty(y+\frac{1}{2})]\\
&\;\;\;\;\;\;\;\;\;\;\;\;\;\;\;- \frac{i}{4}\qty[\delta\qty(x-\frac{1}{2}) - \delta\qty(x+\frac{1}{2})]\qty[\delta\qty(y-\frac{1}{2}) - \delta\qty(y+\frac{1}{2})].
\stepcounter{equation}\tag{\theequation}
\end{align*}
Since all the components of this quasi-joint-spectrum distribution matrices are expressed as a linear combination of \(\delta(x\pm1/2)\delta(y\pm1/2)\), the Kirkwood-Dirac distribution, the quasi-joint-probability distribution quasi-classicalised by this quasi-joint-spectrum distribution, is also a linear combination of \(\delta(x\pm1/2)\delta(y\pm1/2)\) whatever the components of density operator describing the state is.
\end{proof}
This proposition indicates that the Kirkwood-Dirac distribution is a good quasi-joint-probability distribution in that it takes non-zero-values only at the possible values of the observables in question similarly to the genuine probability distributions, contrary to many other quasi-joint-probability distributions seen in Subsec.~\ref{Examples}.

The second topic is whether we can completely distinguish the state by a quasi-joint-probability distribution.
\begin{prop}
\label{状態を区別できる　２次元スピン}
The state of a two-state quantum system is completely distinguishable by the Kirkwood-Dirac distribution:
\begin{equation}
\rho=\rho '
\iff ^\forall (x,y) \in \mathbb{R}^2,K_{\rho}(x,y)=K_{\rho'}(x,y).
\end{equation}
\end{prop}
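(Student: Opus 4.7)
The plan is to exploit the support property of the Kirkwood-Dirac distribution established in Proposition~\ref{取り得る値　２次元} and reduce the claim to a finite-dimensional linear-algebra problem. Since any KD distribution on the two-state system is supported on the four points $(\pm 1/2,\pm 1/2)$, the equality $K_\rho(x,y)=K_{\rho'}(x,y)$ for all $(x,y)\in\mathbb{R}^2$ is equivalent to the four complex scalar equalities $K_\rho(a,b)=K_{\rho'}(a,b)$ with $a,b\in\{\pm 1/2\}$. The direction $\rho=\rho'\Rightarrow K_\rho=K_{\rho'}$ is immediate, so only the converse is substantive, and by linearity of the quasi-classicalization map $\rho\mapsto K_\rho$ it suffices to show that if $X$ is a Hermitian operator with $K_X(\pm 1/2,\pm 1/2)=0$ then $X=0$; we then apply this with $X=\rho-\rho'$.

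The next step is to parametrize an arbitrary Hermitian $2\times 2$ matrix as $X=x_0 I + x_1\sigma_1 + x_2\sigma_2 + x_3\sigma_3$ with $x_0,\ldots,x_3\in\mathbb{R}$ and to compute the four values explicitly. Noting that the hashed operator in Eq.~(\ref{Kirkwood-Dirac distribution の hashed operator}) produces $\#^{\mathrm{K}}_{J_1,J_2}(a,b)=E_{J_1}(a)E_{J_2}(b)$ with $E_{J_1}(\pm 1/2)=(I\pm\sigma_1)/2$ and $E_{J_2}(\pm 1/2)=(I\pm\sigma_2)/2$, and using $\sigma_1\sigma_2=i\sigma_3$ together with $\mathrm{Tr}[\sigma_i\sigma_j]=2\delta_{ij}$, a short calculation yields
\begin{equation*}
K_X(a,b)=\mathrm{Tr}\bigl[E_{J_1}(a)E_{J_2}(b)X\bigr]=\frac{x_0}{2}+a\,x_1+b\,x_2+2iab\,x_3.
\end{equation*}

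Finally, setting this to zero for all four sign combinations gives a linear system solved only by $(x_0,x_1,x_2,x_3)=(0,0,0,0)$: summing the four real parts yields $x_0=0$, appropriate sign-weighted sums of the real parts isolate $x_1$ and $x_2$, and the imaginary parts collectively give $x_3=0$. Therefore $X=0$, and hence $\rho=\rho'$. The only conceptually non-trivial point -- rather than a genuine obstacle -- is the role played by the imaginary part of the KD distribution: the term $2iab\,x_3$ originates from the non-commutativity $\sigma_1\sigma_2=i\sigma_3$, and it is precisely this imaginary contribution that encodes the $\sigma_3$ direction of the Bloch vector, which is never exponentiated in the hashed operator. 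Without it (as happens for the real-valued quasi-probabilities examined in Subsec.~\ref{Examples}) one could only recover $\langle J_1\rangle$ and $\langle J_2\rangle$, and two states differing only in $\langle J_3\rangle$ would remain indistinguishable.
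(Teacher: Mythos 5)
Your proof is correct, and its logical skeleton is the same as the paper's: invoke Proposition~\ref{取り得る値　２次元} to reduce the distribution to its four coefficients at $(\pm1/2,\pm1/2)$, then show that the resulting linear map from states to coefficient quadruples is injective. Where you differ is in how you obtain the linear system. The paper parametrizes $\rho$ by its matrix entries $(a,b,c)$, reads off $K_{\pm\pm}$ from the inverse Fourier transform of the hashed-operator components already computed in Eq.~(\ref{＃K}), and then explicitly inverts the relations. You instead observe that the Kirkwood-Dirac QJSD at a possible value is the ordered projector product $E_{J_1}(a)E_{J_2}(b)$, expand in the Pauli basis, and reduce injectivity to triviality of the kernel on Hermitian operators via linearity; I checked that your closed form $K_X(a,b)=x_0/2+ax_1+bx_2+2iab\,x_3$ reproduces the paper's coefficients (\ref{K++})--(\ref{K--}) under the identification $x_0=1/2$, $x_1=b$, $x_2=c$, $x_3=a-1/2$. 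Your route avoids the Fourier-transform bookkeeping and makes structurally transparent \emph{why} the result holds: the term $2iab\,x_3$, arising from $\sigma_1\sigma_2=i\sigma_3$, shows that the imaginary part is exactly the carrier of the $\sigma_3$ component, which also gives Proposition~\ref{xy平面にいる　２次元} and the ``imaginary part is essential'' moral of Proposition~\ref{擬確率分布で状態を区別できるための条件 2D} for free. The paper's explicit inversion formulas, on the other hand, directly exhibit the reconstruction map $\{K_{\pm\pm}\}\mapsto(a,b,c)$, which is useful if one actually wants to perform state tomography from measured quasi-probabilities.
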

\begin{proof}
According to the Proposition~\ref{取り得る値　２次元}, the Kirkwood-Dirac distribution is expressed as a linear combination of delta functions
\begin{equation}
\label{デルタ関数の前の係数　２次元}
K_{\rho}(x,y)=\sum_{(\sigma,\tau)=(\pm,\pm)} K_{\sigma\tau}\,\delta\qty(x\pm\frac{\sigma}{2})\delta\qty(y\pm\frac{\tau}{2}),
\end{equation}
where \(K_{\pm\pm}\in\mathbb{C}\) are the coefficients specified below.
Since the density operator of any state of the two-state quantum system can be expressed as
\begin{equation}
\label{density operator 2D}
\rho=
\begin{pmatrix}
a & b-ic\\
b+ic & 1-a
\end{pmatrix}
\end{equation}
with \(a,b,c\in\mathbb{R}\), the coefficients \(K_{\pm\pm}\) are given by
\begin{equation}
\label{K++}
K_{++} = \frac{1-i}{4} + \frac{i}{2}a + \frac{1}{2}(b+c),
\end{equation}
\begin{equation}
\label{K+-}
K_{+-} = \frac{1+i}{4} - \frac{i}{2}a + \frac{1}{2}(b-c),
\end{equation}
\begin{equation}
\label{K-+}
K_{-+} = \frac{1+i}{4} - \frac{i}{2}a - \frac{1}{2}(b-c),
\end{equation}
\begin{equation}
\label{K--}
K_{--} = \frac{1-i}{4} + \frac{i}{2}a - \frac{1}{2}(b+c).
\end{equation}
Arranging this relation conversely, we obtain
\begin{equation}
a = \frac{1+i}{2}-iK_{++}-iK_{--},
\end{equation}
\begin{equation}
b = -\frac{1}{2}+K_{++}+K_{+-},
\end{equation}
\begin{equation}
c = -\frac{1}{2}+K_{++}+K_{-+}.
\end{equation}
In other words, \((a,b,c)\) are uniquely determined by \(\{K_{\pm\pm}\}\).
Therefore, the density operator \(\rho\), and thus the state of the system, can be completely distinguished by the Kirkwood-Dirac distribution.
\end{proof}
This proposition also indicates that the Kirkwood-Dirac distribution is a good quasi-joint probability distribution.
Examples in Subsec.~\ref{Examples}, such as the ones in Fig.~\ref{z+-状態のS関数　図}, also showed that the state of a two-state quantum system cannot necessarily be distinguished by a quasi-joint probability distribution, but Prop.~\ref{状態を区別できる　２次元スピン} implies that it can be done by the Kirkwood-Dirac distribution.

The third question is for what states the quasi-joint-probability distribution takes complex values.
In Subsec.~\ref{Quasi-Classicalizations generating Real quasi-probabilities}, we have seen that some quasi-joint-probability distributions are real with respect to arbitrary states and others are not and that the Kirkwood-Dirac distribution belongs to the latter.
Similarly to the fact that negativity of the Wigner distribution of the position and the momentum of a particle tells the quantumness of a state, the complexity of the Kirkwood-Dirac distribution of the \(x\) and \(y\) components of the spin \(1/2\) gives us information about the state of the two-state quantum system.
\begin{prop}
\label{xy平面にいる　２次元}
The following two statements are equivalent to each other:
(i) a state of the two-state quantum system is in the \(xy\)-plane in the Bloch space;
(ii) the Kirkwood-Dirac distribution with respect to the state takes real values:
\begin{equation}
\langle J_3 \rangle _\rho \coloneqq \mathrm{Tr}(J_3\rho) = 0. 
\iff ^\forall (x,y) \in \mathbb{R}^2, K_{\rho}(x,y) \in \mathbb{R}.
\end{equation}
\end{prop}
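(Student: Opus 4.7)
The plan is to leverage the delta-function decomposition of the Kirkwood-Dirac distribution already established in Prop.~\ref{取り得る値　２次元} and explicitly recorded in Prop.~\ref{状態を区別できる　２次元スピン}. Since $K_\rho(x,y)$ is a linear combination of the four Dirac deltas $\delta(x\pm 1/2)\delta(y\pm 1/2)$, which are linearly independent as tempered distributions, the reality condition $K_\rho(x,y)\in\mathbb{R}$ for every $(x,y)\in\mathbb{R}^2$ is equivalent to the simultaneous reality of the four coefficients $K_{++},K_{+-},K_{-+},K_{--}$. This reduces the ``for all $(x,y)$'' statement on the right-hand side to a purely algebraic condition on $a,b,c$.

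Next I would simply read off the imaginary parts from Eqs.~(\ref{K++})--(\ref{K--}). Since $b,c$ are real, the imaginary parts depend only on the constants $\pm 1/4$ and on $a$, giving
\[
\mathrm{Im}\,K_{++}=\mathrm{Im}\,K_{--}=\tfrac{a}{2}-\tfrac{1}{4},\qquad
\mathrm{Im}\,K_{+-}=\mathrm{Im}\,K_{-+}=\tfrac{1}{4}-\tfrac{a}{2}.
\]
All four vanish simultaneously if and only if $a=1/2$. On the other hand, computing $\langle J_3\rangle_\rho$ directly from the parameterisation (\ref{density operator 2D}) yields $\langle J_3\rangle_\rho=a-1/2$, so the condition $a=1/2$ is exactly $\langle J_3\rangle_\rho=0$, which geometrically places $\rho$ on the $xy$-plane of the Bloch ball. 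Chaining these equivalences closes the loop in both directions.

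There is essentially no deep obstacle here: once Prop.~\ref{取り得る値　２次元} and the coefficient formulas (\ref{K++})--(\ref{K--}) are in hand, the proof is a one-line inspection of imaginary parts. The only point deserving a brief justification is the passage from ``$K_\rho(x,y)\in\mathbb{R}$ pointwise'' to ``each $K_{\pm\pm}\in\mathbb{R}$''; I would mention linear independence of the delta distributions (or equivalently, that their supports are disjoint singletons in $\mathbb{R}^2$) to make this step rigorous. After that, the remaining work is bookkeeping and is already effectively done in the preceding proposition.
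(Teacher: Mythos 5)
Your proposal is correct. The first half coincides with the paper's argument: both reduce the pointwise reality of $K_\rho(x,y)$ to the reality of the four coefficients $K_{\pm\pm}$ in Eqs.~(\ref{K++})--(\ref{K--}) and observe that the imaginary parts all vanish precisely when $a=1/2$ (your explicit formulas $\mathrm{Im}\,K_{++}=\mathrm{Im}\,K_{--}=a/2-1/4$ and $\mathrm{Im}\,K_{+-}=\mathrm{Im}\,K_{-+}=1/4-a/2$ check out, and your remark on the linear independence of the deltas supplies a justification the paper leaves implicit). Where you genuinely diverge is the second half: the paper routes the equivalence $a=1/2\iff\langle J_3\rangle_\rho=0$ through Lemma~\ref{xy面内の状態の密度行列}, which constructs the general density matrix of an $xy$-plane state as an equal mixture of the two antipodal pure states $\ket{\psi(\theta,\phi)}$ and $\ket{\psi(\pi-\theta,\phi)}$ and then identifies the mixing parameter $m=1/2$ with the plane condition. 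You instead compute $\langle J_3\rangle_\rho=\mathrm{Tr}(J_3\rho)=a-1/2$ directly from the parameterisation (\ref{density operator 2D}), which is legitimate since membership in the $xy$-plane of the Bloch space is by definition the vanishing of the third coordinate $\langle J_3\rangle$. Your route is shorter and avoids the mixture construction entirely; the paper's lemma does extra work in exhibiting the explicit form (\ref{density operator of the state in xy-plane}) of all $xy$-plane density operators, which has some independent interest but is not needed for the stated equivalence.
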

To prove this proposition, we first prove the following lemma:
\begin{lem}
\label{xy面内の状態の密度行列}
The density operator of the state that is in the \(xy\)-plane in the Bloch space is expressed in the form
\begin{equation}
\label{density operator of the state in xy-plane}
\rho= \frac{1}{2}
\begin{pmatrix}
1 & e^{-i\phi}\sin\theta\\
e^{i\phi}\sin\theta & 1
\end{pmatrix}
\end{equation}
by using two parameters \(\theta\in[0,\pi]\) and \(\phi\in[0,2\pi)\).
\end{lem}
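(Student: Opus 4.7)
The plan is to start from the general Hermitian, trace-one parametrization of a two-state density operator given in Eq.~(\ref{density operator 2D}), use the hypothesis $\langle J_3\rangle_\rho = 0$ to fix one of the three real parameters, and then repackage the two remaining real degrees of freedom into the polar form that appears on the right-hand side of Eq.~(\ref{density operator of the state in xy-plane}).

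First I would evaluate $\langle J_3\rangle_\rho = \mathrm{Tr}(J_3\rho)$ using $J_3 = \sigma_3/2$ and the form of $\rho$ in Eq.~(\ref{density operator 2D}). A direct computation gives $\langle J_3\rangle_\rho = a - 1/2$, so the $xy$-plane hypothesis forces $a = 1/2$ and reduces the density operator to
\[
\rho = \frac{1}{2}\begin{pmatrix} 1 & 2b - 2ic \\ 2b + 2ic & 1 \end{pmatrix}.
\]
Next I would impose positivity: for a Hermitian, trace-one $2 \times 2$ matrix, $\rho \geq 0$ is equivalent to $\det\rho \geq 0$, which here reads $(2b)^2 + (2c)^2 \leq 1$. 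Hence $(2b,2c)$ lies in the closed unit disk and admits a polar decomposition $2b = \sin\theta\cos\phi$, $2c = \sin\theta\sin\phi$ with $\theta \in [0,\pi]$ (so that $\sin\theta$ sweeps out $[0,1]$) and $\phi \in [0,2\pi)$. Substituting back turns the off-diagonal entries into $\tfrac{1}{2}\sin\theta\, e^{\pm i\phi}$, producing the form in Eq.~(\ref{density operator of the state in xy-plane}).

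There is no genuine obstacle here; the only subtlety worth flagging is that the stated parametrization is not one-to-one, since each nonzero radius is attained by two values of $\theta \in [0,\pi]$, but the lemma only asserts existence of some $(\theta,\phi)$ giving that form. The preference for $\sin\theta$ over a free radius $r \in [0,1]$ simply makes the expression coincide with the standard Bloch-sphere angular parametrization restricted to the equator, which will presumably be convenient for the proof of Proposition~\ref{xy平面にいる　２次元} that this lemma is designed to support.
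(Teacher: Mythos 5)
Your proof is correct, but it takes a genuinely different route from the paper's. The paper argues geometrically: it writes down the pure-state density operator \(\ket{\psi(\theta,\phi)}\bra{\psi(\theta,\phi)}\), represents an arbitrary mixed state as a convex combination \(m\rho_1+(1-m)\rho_2\) of the two pure states lying on the vertical line through \(\rho\) in the Bloch ball, derives the general formula (\ref{mixed stateのdensity operator 2D}) for arbitrary mixing weight \(m\), and then sets \(m=1/2\), which is where the line meets the \(xy\)-plane. You instead work algebraically: you compute \(\langle J_3\rangle_\rho=a-1/2\) from the generic parametrization (\ref{density operator 2D}), set \(a=1/2\), and invoke positive semidefiniteness (\(\det\rho\geq0\)) to show the off-diagonal entry lies in the closed unit disk and hence admits the polar form \(\tfrac{1}{2}\sin\theta\,e^{\pm i\phi}\). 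Both arguments are sound. Your version is more elementary and self-contained, and it has the side benefit that the identity \(\langle J_3\rangle_\rho=a-1/2\) immediately gives the equivalence \(a=1/2\iff\langle J_3\rangle_\rho=0\) that the proof of Proposition~\ref{xy平面にいる　２次元} needs, whereas the paper routes that equivalence through the auxiliary parameter \(m\). What the paper's construction buys in exchange is the more general formula (\ref{mixed stateのdensity operator 2D}) valid for any \(m\in[0,1]\), which encodes the full vertical-line structure of the Bloch ball and is reused verbatim in the subsequent proposition; your observation that the parametrization is not injective in \(\theta\) is a fair caveat but harmless, since only existence is claimed.
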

\begin{proof}
First, the pure state \(\ket{\psi(\theta,\phi)}\), a state which is described as \((\langle J_1\rangle,\,\langle J_2\rangle,\,\langle J_3\rangle)=1/2\,(\sin\theta\cos\phi,\,\sin\theta\sin\phi,\,\cos\theta)\), is expressed as a vector
\begin{equation}
\ket{\psi(\theta,\phi)} =
\begin{pmatrix}
\cos\frac{\theta}{2}\\
e^{i\phi}\sin\frac{\theta}{2}
\end{pmatrix},
\end{equation}
and then as a density operator
\begin{equation}
\label{純粋状態の密度演算子}
\ket{\psi(\theta,\phi)}\bra{\psi(\theta,\phi)} = \frac{1}{2}
\begin{pmatrix}
1+\cos\theta & e^{-i\phi}\sin{\theta}\\
e^{i\phi}\sin\theta & 1-\cos\theta
\end{pmatrix}.
\end{equation}
Next, we will consider the way to express the mixed state by the mixture of two pure states.
Let \(\rho\) denote the point in the Bloch space expressing the state in question, let \(l\) denote the straight line perpendicular to the \(xy\)-plane passing through \(\rho\), and let \(\rho_1\) and \(\rho_2\) denote the intersections of \(l\) and the Bloch sphere.
We then describe the polar coordinate of \(\rho_1\) as \((1/2,\,\theta,\,\phi)\) and that of \(\rho_2\) as \((1/2,\,\pi-\theta,\,\phi)\); see Fig.~\ref{Bloch球内の状態　図}.
\begin{figure}[H]
\centering
\includegraphics[width=0.4\textwidth]{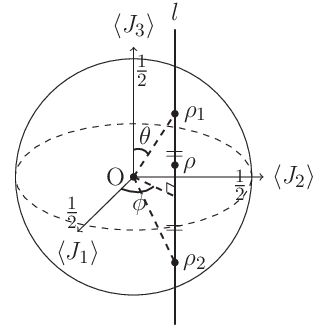}
\caption{Location of a mixed state \(\rho\) and two pure states \(\rho_1\) and \(\rho_2\) in the Bloch space.}
\label{Bloch球内の状態　図}
\end{figure}
Because of the definition of \(\rho_1\) and \(\rho_2\), \(\rho\) is always an internally dividing point of \(\rho_1\) and \(\rho_2\):
\begin{equation}
\langle J_i\rangle_\mathrm{\rho} = m\langle J_i\rangle_\mathrm{\rho_1} + (1-m)\langle J_i\rangle_\mathrm{\rho_2},\;\;\;m\in[0,1],\;i=1,2,3.
\end{equation}
Here, \(\langle J_i\rangle_\mathrm{\chi}\;(\chi=\rho_1,\rho_2,\rho_3)\) denote the expectation value of \(J_i\) of the state expressed by the point \(chi\) in the Bloch space.
In particular, \(\rho\) becomes the middle point of \(\rho_1\) and \(\rho_2\) iff \(\rho\) is in the \(xy\)-plane:
\begin{equation}
\langle J_i\rangle_\rho = \frac{1}{2}\langle J_i\rangle_{\rho_1} + \frac{1}{2}\langle J_i\rangle_{\rho_2}\;(i=1,2,3) \iff \rho \mbox{ is in the \(xy\)-plane}.
\end{equation}
Because of the linearity of expectations in quantum theory and the fact that the state of the two-state quantum system is identified by a combination of the expectation values of spins in the three directions, the density operator of the state expressed by the point \(\rho\) in the Bloch space becomes
\begin{align*}
\label{mixed stateのdensity operator 2D}
\rho_\mathrm{\rho} 
&= m\rho_1 +(1-m)\rho_2\\
&= m\ket{\psi(\theta,\phi)}\bra{\psi(\theta,\phi)} + (1-m)\ket{\psi(\pi-\theta,\phi)}\bra{\psi(\pi-\theta,\phi)}\\
&=\frac{1}{2}
\begin{pmatrix}
1+(2m-1)\cos\theta & e^{-i\phi}\sin{\theta}\\
e^{i\phi}\sin\theta & 1-(2m-1)\cos\theta
\end{pmatrix}.
\stepcounter{equation}\tag{\theequation}
\end{align*}
By substituting \(m=1/2\) to Eq.~(\ref{mixed stateのdensity operator 2D}), we obtain Eq.~(\ref{density operator of the state in xy-plane}).
\end{proof}
By using this lemma, Prop.~\ref{xy平面にいる　２次元} is easily proved:
\begin{proof}[Proof of Proposition \ref{xy平面にいる　２次元}]
By looking at Eqs.~(\ref{デルタ関数の前の係数　２次元}), (\ref{K++})--(\ref{K--}), we find 
that the Kirkwood-Dirac distribution \(K_{\rho}(x,y)\) takes real values at all points \((x,y)\) iff \(a = 1/2\). 
Then by substituting the density operator (\ref{mixed stateのdensity operator 2D}) to (\ref{density operator 2D}), we see that
\begin{equation}
a=\frac{1}{2} \iff m=\frac{1}{2} \iff \mbox{\(\rho\) is in the \(xy\)-plane}.
\end{equation}
Therefore
\begin{align*}
^\forall (x,y) \in \mathbb{R}^2, K_{\rho}(x,y) \in \mathbb{R}
&\iff a = 1/2\\
&\iff m = 1/2\\
&\iff \langle J_3 \rangle _\rho = 0.
\stepcounter{equation}\tag{\theequation}
\end{align*}
\end{proof}
Although the value of a quasi-joint-probability distribution is not necessarily positive semi-definite in contrast to the standard probabilities, it may be argued that it is more natural to be real than to take complex values from the perspective of interpreting the quasi-joint-probability as a probability.
Proposition \ref{xy平面にいる　２次元} can be interpreted that the Kirkwood-Dirac distribution of the \(x\) and \(y\) components of the spin \(1/2\) becomes natural in the sense that it takes real values when the state lives in the \(xy\)-plane.
In addition, we can easily see whether the expectation value of the spin of the state has non-zero components in a specific direction by looking at the imaginary part of the Kirkwood-Dirac distribution of two directions of spin orthogonal to the direction in question.

\subsection{Other quasi-classicalizations}
\label{Other Quasi-Classicalizations}
In Subsec.~\ref{General Features of Kirkwood-Dirac Distributions of Spin 1/2}, we understood that the Kirkwood-Dirac distribution is a quasi-joint-probability distribution for the two-state quantum system that behaves nicely.
In this subsection, we will see the conditions for the way of quasi-classicalization so that the corresponding quasi-joint-probability distributions may satisfy such properties.
In this subsection, we restrict the hashed operators to the unitary ones, which can be written in the forms
\begin{equation}
\label{＃T1}
\hat{\#}^{T1}_{J_1,J_2}(s,t) = e^{-isa_1J_1}e^{-itb_1J_2}e^{-isa_2J_1}e^{-itb_2J_2}...e^{-isa_nJ_1}e^{-itb_nJ_2},
\end{equation}
\begin{equation}
\label{＃T2}
\hat{\#}^{T2}_{J_1,J_2}(s,t) = e^{-isa_1J_1}e^{-itb_1J_2}e^{-isa_2J_1}e^{-itb_2J_2}...e^{-isb_nJ_2}e^{-ita_{n+1}J_1},
\end{equation}
\begin{equation}
\label{＃T3}
\hat{\#}^{T3}_{J_1,J_2}(s,t) = e^{-isb_1J_2}e^{-ita_1J_1}e^{-isb_2J_2}e^{-ita_2J_1}...e^{-isb_nJ_2}e^{-ita_nJ_1},
\end{equation}
\begin{equation}
\label{＃T4}
\hat{\#}^{T4}_{J_1,J_2}(s,t) = e^{-isb_1J_2}e^{-ita_1J_1}e^{-isb_2J_2}e^{-ita_2J_1}...e^{-isa_nJ_1}e^{-itb_{n+1}J_2}
\end{equation}
for \(\sum_ia_i=\sum_ib_i=1,\,n\in\mathbb{N}\). 
These include \(\hat{\#}^W_{J_1,J_2}(s,t)\) in the \(n \to \infty\) limit.
This type of hashed operator becomes an element of the fundamental representation of \(SU(2)\).
Though hashed operators are permitted to take a more general form, unitary ones are the basic form and others are understood as an extension of this form.

First, we will look at the restrictions so that the quasi-joint-probability may take non-zero values only at the possible values of the observables in question.
The conditions for a quasi-joint-probability distribution to take non-zero values only at the possible values of the \(x\) and \(y\) components of the spin $1/2$ is that the components of the quasi-joint-spectrum distribution become a linear combination of \(\delta(x\pm1/2)\delta(y\pm1/2)\), or equivalently, that the components of the hashed operator to become a linear combination of \(e^{\pm is/2\pm it/2}\).
Since the components of 
\begin{equation}
\label{J1の指数関数}
e^{-isa_iJ_1} = \frac{1}{2}
\begin{pmatrix}
e^{-i\frac{s}{2}a_i}+e^{i\frac{s}{2}a_i} & e^{-i\frac{s}{2}a_i}-e^{i\frac{s}{2}a_i}\\*
e^{-i\frac{s}{2}a_i}-e^{i\frac{s}{2}a_i} & e^{-i\frac{s}{2}a_i}+e^{i\frac{s}{2}a_i}
\end{pmatrix}
\end{equation}
include both \(e^{-ia_is/2}\) and \(e^{ia_is/2}\) and the components of
\begin{equation}
\label{J2の指数関数}
e^{-itb_iJ_2} = \frac{1}{2}
\begin{pmatrix}
e^{-i\frac{t}{2}b_i}+e^{i\frac{t}{2}b_i} & -i(e^{-i\frac{t}{2}b_i}-e^{i\frac{t}{2}b_i})\\
i(e^{-i\frac{t}{2}b_i}-e^{i\frac{t}{2}b_i}) & e^{-i\frac{t}{2}b_i}+e^{i\frac{t}{2}b_i}
\end{pmatrix}
\end{equation}
include both \(e^{-ib_it/2}\) and \(e^{ib_it/2}\), the components of the hashed operators (\ref{＃T1})--(\ref{＃T4}) generally include the term \(e^{-is/2(\pm a_1\pm a_2...\pm a_n)-it/2(\pm b_1 \pm b_2...\pm b_n)}\).
Therefore, we can say that the Kirkwood-Dirac type \(e^{-isJ_1}e^{-itJ_2}\) and \(e^{-itJ_2}e^{-isJ_1}\) are the only choice of hashed operator that generates a quasi-joint-probability distribution which takes non-zero values only at the possible values of the \(x\) and \(y\) components of the spin \(1/2\).

Second, we will consider the restrictions to the choice of the quasi-joint-spectrum distribution so that the state of the two-state quantum system can be distinguished by the quasi-joint-probability distribution generated from the quasi-joint-spectrum distribution.
The answer to this question is shown in the following proposition:
\begin{prop}
\label{擬確率分布で状態を区別できるための条件 2D}
The following two statements are equivalent to each other: (i) the states of a two-state quantum system can be distinguished by the quasi-joint-probability distribution of the \(x\) and \(y\) components of the spin \(1/2\); (ii) the corresponding quasi-joint-spectrum distribution is non-Hermitian.
\end{prop}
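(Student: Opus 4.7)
The plan is to reduce the proposition to a single algebraic condition on the $\sigma_{3}$-component of the hashed operator's Pauli decomposition. Since every hashed operator of the form (\ref{＃T1})--(\ref{＃T4}) lies in $SU(2)$, I can write $\hat{\#}(s,t) = \alpha_{0}(s,t)\,I + i\sum_{j=1}^{3}\alpha_{j}(s,t)\,\sigma_{j}$ with real-valued $\alpha_{i}$; correspondingly the QJSD decomposes as $\#(x,y) = \beta_{0}I + \sum_{j}\beta_{j}\sigma_{j}$ with $\beta_{0}=\mathcal{F}^{-1}(\alpha_{0})$ and $\beta_{j}=i\mathcal{F}^{-1}(\alpha_{j})$ (up to the normalisation in (\ref{QJSD})). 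Parametrising an arbitrary state as $\rho = \tfrac{1}{2}(I + r_{1}\sigma_{1} + r_{2}\sigma_{2} + r_{3}\sigma_{3})$ with $r_{j}=2\langle J_{j}\rangle$ gives $P_{\rho}(x,y) = \beta_{0} + r_{1}\beta_{1} + r_{2}\beta_{2} + r_{3}\beta_{3}$, so distinguishability is equivalent to the $\mathbb{R}$-linear independence of $\beta_{1},\beta_{2},\beta_{3}$ as distributions.

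The crux is a parity lemma: for any hashed operator of the restricted form, $\alpha_{0}$ and $\alpha_{3}$ are even under $(s,t)\mapsto(-s,-t)$ whereas $\alpha_{1}$ and $\alpha_{2}$ are odd. To prove it I expand each factor as $e^{-isa_{i}J_{1}}=\cos(sa_{i}/2)\,I - i\sin(sa_{i}/2)\,\sigma_{1}$ (and analogously for the $J_{2}$-factors), and collect the monomials of the product. Using $\sigma_{1}^{2}=\sigma_{2}^{2}=I$ and $\sigma_{1}\sigma_{2}=i\sigma_{3}=-\sigma_{2}\sigma_{1}$, the reduced Pauli matrix of each monomial is determined by the parities of the counts of chosen $\sigma_{1}$- and $\sigma_{2}$-branches: a contribution to $\sigma_{3}$ requires both counts to be odd. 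Each such branch carries a sine in $s$ or $t$ while the $I$-branches carry only cosines, so the overall sign of a $\sigma_{3}$-monomial under $(s,t)\mapsto(-s,-t)$ is $(-1)^{\mathrm{odd}+\mathrm{odd}}=+1$, establishing $\alpha_{3}$ even; the same counting yields the parities of $\alpha_{0},\alpha_{1},\alpha_{2}$. The Hermiticity condition $\hat{\#}(s,t)=\hat{\#}^{\dagger}(-s,-t)$ from (\ref{real hashed operator}) translates into $\alpha_{0}$ even and $\alpha_{j}$ odd for $j=1,2,3$; the first three conditions hold automatically, so Hermiticity reduces to $\alpha_{3}$ being simultaneously even and odd, i.e.\ $\alpha_{3}\equiv 0$, equivalently $\beta_{3}\equiv 0$.

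With this equivalence in hand, distinguishability follows by inspecting $\beta_{3}$. When $\beta_{3}\equiv 0$, $P_{\rho}$ carries no dependence on $r_{3}$, so any two states with identical $(r_{1},r_{2})$ and distinct $\langle J_{3}\rangle$ produce identical quasi-probabilities, giving the ``Hermitian~$\Rightarrow$~not distinguishable'' direction. When $\beta_{3}\not\equiv 0$, the Fourier inverse of a real even function is real while that of a real odd function is purely imaginary, so the parities derived above force $\beta_{0},\beta_{1},\beta_{2}$ to be real-valued and $\beta_{3}$ to be purely imaginary. Hence $\operatorname{Im} P_{\rho} = r_{3}\,\operatorname{Im}\beta_{3}$ uniquely recovers $r_{3}$, while the marginals in (\ref{marginal probability distributions}) recover $r_{1}$ and $r_{2}$; so $\rho$ is fully determined by $P_{\rho}$, establishing the ``non-Hermitian~$\Rightarrow$~distinguishable'' direction. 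The main obstacle is the combinatorial parity lemma: one must track the sign carefully across the Pauli-algebra reductions. Once the branch expansion is carried out and the $\sigma_{3}$-contributions are isolated, the remainder of the argument is straightforward linear-algebraic book-keeping.
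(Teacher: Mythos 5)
Your proof is correct, and while it lands on exactly the same mechanism as the paper's, it packages it differently. The paper (restricting to the same class of unitary hashed operators) proceeds through three preparatory lemmas: a reduction showing that distinguishability hinges only on separating states that differ in their $z$-coordinate, equivalently on $\bigl(\hat{\#}\bigr)_{11}\neq\bigl(\hat{\#}\bigr)_{22}$; an $SU(2)$ fact that the two diagonal entries coincide iff they are both real; and a sign-flip lemma stating that under $(s,t)\mapsto(-s,-t)$ the diagonal entries are invariant while the off-diagonal entries change sign. These are then assembled into the Hermiticity criterion $\hat{\#}^{\dagger}(-s,-t)=\hat{\#}(s,t)$. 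Your Pauli decomposition $\hat{\#}=\alpha_{0}I+i\sum_{j}\alpha_{j}\sigma_{j}$ with real $\alpha_{j}$ compresses all three into the single parity statement ($\alpha_{0},\alpha_{3}$ even, $\alpha_{1},\alpha_{2}$ odd): the diagonal entries are $\alpha_{0}\pm i\alpha_{3}$, so the paper's ``equal diagonals iff real diagonals'' lemma becomes the trivial observation that both conditions say $\alpha_{3}=0$, and the Hermiticity condition collapses to $\alpha_{3}\equiv0$. What your route buys in addition is the explicit identity $\mathrm{Im}\,P_{\rho}=r_{3}\,\mathrm{Im}\,\beta_{3}$ with $\beta_{0},\beta_{1},\beta_{2}$ real and $\beta_{3}$ purely imaginary, which not only settles the ``non-Hermitian $\Rightarrow$ distinguishable'' direction but also recovers, as a free byproduct, the earlier proposition that $K_{\rho}$ is real iff the state lies in the $xy$-plane. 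One cosmetic remark: the ``$\mathbb{R}$-linear independence of $\beta_{1},\beta_{2},\beta_{3}$'' framing announced at the outset is not what you actually use --- the marginals fix $r_{1},r_{2}$ unconditionally, so the whole question reduces to whether $\beta_{3}\not\equiv0$ --- but this does not affect correctness.
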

To prove this proposition, we first prove the following lemmas:
\begin{lem}
\label{z+とz-を区別できればいい}
A necessary and sufficient condition for a quasi-classicalization so that the state of a two-state quantum system may be distinguishable by the quasi-joint-probability distribution is given by the following relation of components of the quasi-joint-spectrum distribution:
\begin{equation}
\bigl(\#_{J_1,J_2}(x,y)\bigr)_{11} \neq \bigl(\#_{J_1,J_2}(x,y)\bigr)_{22}.
\end{equation}
\end{lem}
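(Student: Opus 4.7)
The plan is to parametrise an arbitrary density operator of the two-state system as in Eq.~(\ref{density operator 2D}) by three real numbers $a,b,c$, expand $P_{\rho}(x,y)=\mathrm{Tr}[\#_{J_1,J_2}(x,y)\rho]$ linearly in these parameters, and then use the marginal-consistency condition of Subsec.~\ref{The General Framework of Quasi-Probabilities and Quantizations} to strip away every dependency except that on $a$. Setting $M_{ij}(x,y)\coloneqq\bigl(\#_{J_1,J_2}(x,y)\bigr)_{ij}$, a direct trace computation yields
\begin{equation*}
P_{\rho}(x,y) = M_{22} + a(M_{11}-M_{22}) + b(M_{12}+M_{21}) + ic(M_{12}-M_{21}),
\end{equation*}
so the difference $P_{\rho}-P_{\rho'}$ between two states with parameters $(a,b,c)$ and $(a',b',c')$ is the $\mathbb{R}$-linear combination of the three functions $M_{11}-M_{22}$, $M_{12}+M_{21}$ and $i(M_{12}-M_{21})$ with coefficients $a-a'$, $b-b'$, $c-c'$.

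Next I would integrate this difference separately over $y$ and over $x$, invoking the fact established earlier in the paper that every quasi-joint-probability distribution in the present formalism has the correct one-dimensional marginals. Since the $J_1$-marginal must equal $(1/2+b)\delta(x-1/2)+(1/2-b)\delta(x+1/2)$ for every admissible $(a,b,c)$, matching powers of the parameters forces $\int(M_{11}-M_{22})\,\mathrm{d}y=0$, $\int(M_{12}-M_{21})\,\mathrm{d}y=0$ and $\int(M_{12}+M_{21})\,\mathrm{d}y=\delta(x-1/2)-\delta(x+1/2)$, with the analogous identities for the $x$-integrals coming from the $J_2$-marginal. Consequently, the assumption $P_{\rho}\equiv P_{\rho'}$ forces first $b=b'$ (from the $y$-integral) and then $c=c'$ (from the $x$-integral), reducing the problem to the residual identity $(a-a')(M_{11}-M_{22})(x,y)\equiv 0$.

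The conclusion then follows in both directions. If $M_{11}\not\equiv M_{22}$, the residual identity forces $a=a'$ as well, so $\rho=\rho'$ and the state is distinguishable; conversely, if $M_{11}\equiv M_{22}$ pointwise, the two pure states $\ket{z+}$ and $\ket{z-}$ (for which $b=c=0$ with $a\in\{1,0\}$) produce identical quasi-joint-probability distributions, witnessing non-distinguishability. The only step requiring care is the extraction of the integral identities from the marginal-consistency condition, since it is what guarantees that $b$ and $c$ are always legible from the marginals regardless of the choice of hashed operator; once those identities are secured, the rest reduces to straightforward linear algebra.
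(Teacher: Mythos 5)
Your proof is correct and follows essentially the same route as the paper's: both use the marginal-consistency of the framework to show that the $J_1$- and $J_2$-directions (your $b$ and $c$) are always recoverable from any quasi-joint-probability distribution, reducing distinguishability to whether the coefficient of the remaining parameter $a$, namely $\bigl(\#_{J_1,J_2}\bigr)_{11}-\bigl(\#_{J_1,J_2}\bigr)_{22}$, vanishes identically, with the $\ket{z\pm}$ states witnessing failure when it does. Your explicit extraction of the integral identities is merely a more careful rendering of the paper's one-line appeal to the common marginals; the substance is identical.
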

\begin{proof}
Since any quasi-joint-probability distributions give the same marginal probability distributions, any quasi-joint-probability distributions of the \(x\) and \(y\) components of the spin \(1/2\) can distinguish two states whose expectation values of \(J_1\) or \(J_2\) differ from each other.
Therefore, we can distinguish all the states of a two-state quantum system by a quasi-joint-probability distribution of the \(x\) and \(y\) components of the spin \(1/2\) if we can distinguish two states that differ only its \(z\)-coordinates in the Bloch space.
Because of the linearity of the expectations in the quantum theory, we thus conclude that it is sufficient to distinguish \(\ket{\psi(\theta,\phi)}\) and \(\ket{\psi(\pi-\theta,\phi)}\).

Since the quasi-joint-probability distribution with respect to \(\ket{\psi(\theta,\phi)}\) and \(\ket{\psi(\pi-\theta,\phi)}\) are given by
\begin{align*}
P_{\theta,\phi}(x,y) = &\bigl(\hat{\#}_{J_1,J_2}(x,y)\bigr)_{11}\frac{1+\cos\theta}{2} +\bigl(\hat{\#}_{J_1,J_2}(x,y)\bigr)_{12}\frac{e^{i\phi}\sin\theta}{2}\\
&+\bigl(\hat{\#}_{J_1,J_2}(x,y)\bigr)_{21}\frac{e^{-i\phi}\sin\theta}{2} +\bigl(\hat{\#}_{J_1,J_2}(x,y)\bigr)_{22}\frac{1-\cos\theta}{2}
\stepcounter{equation}\tag{\theequation}
\end{align*}
and
\begin{align*}
P_{\pi-\theta,\phi}(x,y) = &\bigl(\hat{\#}_{J_1,J_2}(x,y)\bigr)_{11}\frac{1-\cos\theta}{2} +\bigl(\hat{\#}_{J_1,J_2}(x,y)\bigr)_{12}\frac{e^{i\phi}\sin\theta}{2}\\
&+\bigl(\hat{\#}_{J_1,J_2}(x,y)\bigr)_{21}\frac{e^{-i\phi}\sin\theta}{2} +\bigl(\hat{\#}_{J_1,J_2}(x,y)\bigr)_{22}\frac{1+\cos\theta}{2},
\stepcounter{equation}\tag{\theequation}
\end{align*}
respectively, \(P_{\theta,\phi}(x,y)=P_{\pi-\theta,\phi}(x,y)\) is equivalent to \(\bigl(\hat{\#}_{J_1,J_2}(x,y)\bigr)_{11} = \bigl(\hat{\#}_{J_1,J_2}(x,y)\bigr)_{22}\).
Therefore \(\bigl(\hat{\#}_{J_1,J_2}(x,y)\bigr)_{11} \neq \bigl(\hat{\#}_{J_1,J_2}(x,y)\bigr)_{22}\) is a necessary and sufficient condition for the corresponding quasi-joint-probability distribution to be able to distinguish the states.
\end{proof}
This lemma also indicates that every state of the two-state quantum system is distinguishable by a quasi-joint-probability distribution of the \(x\) and \(y\) components of the spin \(1/2\) if the \(\ket{z+}\) state and the \(\ket{z-}\) state is distinguishable by the quasi-joint-probability distribution since \(P_{z+}(x,y) = \bigl(\#_{J_1,J_2}(x,y)\bigr)_{11},\;P_{z-}(x,y) = \bigl(\#_{J_1,J_2}(x,y)\bigr)_{22}\).
\begin{lem}
\label{ユニタリ行列の性質}
When considering an element of the fundamental representation of \(SU(2)\)
\begin{equation}
U =
\begin{pmatrix}
\mu & \nu\\
\kappa & \lambda
\end{pmatrix}
\in SU(2),
\end{equation}
\(\mu=\lambda\) is equivalent to \(\mu,\lambda\in\mathbb{R}\).
\end{lem}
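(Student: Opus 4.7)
The plan is to exploit the well-known structural form of $SU(2)$ matrices: for any $U \in SU(2)$, the bottom-right entry is the complex conjugate of the top-left entry, i.e.\ $\lambda = \bar{\mu}$. Once this fact is in hand, the claimed equivalence follows in a single line in each direction.

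First I would derive $\lambda = \bar{\mu}$ from the two defining properties of $SU(2)$, namely unitarity $U^{-1} = U^{\dagger}$ together with $\det U = 1$. Because $\det U = \mu\lambda - \nu\kappa = 1$, the explicit cofactor formula for a $2\times 2$ inverse gives $U^{-1} = \begin{pmatrix} \lambda & -\nu \\ -\kappa & \mu \end{pmatrix}$. Equating this with $U^{\dagger} = \begin{pmatrix} \bar{\mu} & \bar{\kappa} \\ \bar{\nu} & \bar{\lambda} \end{pmatrix}$ entry by entry yields in particular $\lambda = \bar{\mu}$ (along with $\kappa = -\bar{\nu}$, which is not needed for this lemma).

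With this structural identity in place, both directions of the equivalence are immediate. For the forward implication, $\mu = \lambda$ combined with $\lambda = \bar{\mu}$ gives $\mu = \bar{\mu}$, so $\mu \in \mathbb{R}$, and consequently $\lambda = \mu \in \mathbb{R}$ as well. For the converse, if $\mu, \lambda \in \mathbb{R}$, then the same identity gives $\lambda = \bar{\mu} = \mu$.

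The main potential obstacle is essentially administrative: presenting the derivation of $\lambda = \bar{\mu}$ cleanly without invoking more of the unitarity relations than necessary. The three scalar equations implied by $U^{\dagger} U = I$ contain redundant information, and the shortest route is through the cofactor expression of $U^{-1}$ combined with the unit-determinant condition, rather than chasing $|\mu|^2 = |\lambda|^2$ and the off-diagonal orthogonality relation separately. Once that single identity is extracted, no further computation is required and the lemma reduces to a one-line remark.
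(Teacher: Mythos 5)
Your proof is correct, and it takes a genuinely different route from the paper. The paper works entirely with the scalar consequences of unitarity and the determinant condition: it combines $\det U=1$ with $UU^{\dagger}=U^{\dagger}U=I$ to obtain $\lvert\mu\lambda-1\rvert^{2}=(1-\lvert\mu\rvert^{2})(1-\lvert\lambda\rvert^{2})$, hence the identity $2\,\mathrm{Re}(\mu\lambda)=\lvert\mu\rvert^{2}+\lvert\lambda\rvert^{2}$, and then checks the two directions separately ($2\mu\lambda=\mu^{2}+\lambda^{2}$ forces $\mu=\lambda$ when both are real; $\mathrm{Re}(\mu^{2})=\lvert\mu\rvert^{2}$ forces $\mu\in\mathbb{R}$ when $\mu=\lambda$). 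You instead extract the full structural constraint $\lambda=\bar{\mu}$ (and $\kappa=-\bar{\nu}$) once and for all from $U^{-1}=U^{\dagger}$ together with the cofactor formula, after which both directions of the equivalence are one-line consequences. Your route is shorter and arguably more transparent, since it isolates the single identity that actually drives the lemma rather than a derived quadratic relation; it also yields $\kappa=-\bar{\nu}$ unconditionally for every element of $SU(2)$, which makes the paper's follow-up remark (that $\kappa^{*}=-\nu$ holds when $0\neq\mu=\lambda\in\mathbb{R}$) immediate without the extra appeal to the off-diagonal unitarity relation. The paper's more computational derivation buys nothing additional here; the two proofs establish exactly the same statement.
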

\begin{proof}
Since \(U \in SU(2)\),
\begin{equation}
\det U = \mu\lambda-\nu\kappa =1,
\end{equation}
\begin{equation}
\label{unitarity 2D}
UU^{\dagger} =
\begin{pmatrix}
\lvert\mu\rvert^2+\lvert\nu\rvert^2 & \mu\kappa^*+\nu\lambda^*\\
\mu^*\kappa+\nu^*\lambda & \lvert\kappa\rvert^2+\lvert\lambda\rvert^2
\end{pmatrix}
=
\begin{pmatrix}
1 & 0\\
0 & 1
\end{pmatrix},
\end{equation}
\begin{equation}
U^{\dagger}U =
\begin{pmatrix}
\lvert\mu\rvert^2+\lvert\kappa\rvert^2 & \mu^*\nu+\kappa^*\lambda\\
\mu\nu^*+\kappa\lambda^* & \lvert\nu\rvert^2+\lvert\lambda\rvert^2
\end{pmatrix}
=
\begin{pmatrix}
1 & 0\\
0 & 1
\end{pmatrix}
\end{equation}
hold.
Therefore, 
\begin{align*}
\label{SU(2) 恒等式}
\lvert\mu\lambda-1\rvert^2 &= \lvert\nu\rvert^2\lvert\kappa\rvert^2 = (1-\lvert\mu\rvert^2)(1-\lvert\lambda\rvert^2)\\
2\mathrm{Re}(\mu\lambda) &= \lvert\mu\rvert^2 + \lvert\lambda\rvert^2.
\stepcounter{equation}\tag{\theequation}
\end{align*}
Hence, if \(\mu,\lambda\in\mathbb{R}\), Eq.~(\ref{SU(2) 恒等式}) becomes
\begin{align}
2\mu\lambda = \mu^2+\lambda^2,
\end{align}
thereby pointry to \(\mu=\nu\).
Conversely if \(\mu=\lambda\), Eq~(\ref{SU(2) 恒等式}) becomes
\begin{equation}
\mathrm{Re}(\mu^2) = \lvert\mu\rvert^2,
\end{equation}
thereby pointry to \(\mu=\nu\in\mathbb{R}\).
\end{proof}
In addition, since 
\begin{equation}
\mu\kappa^*+\nu\lambda = 0
\end{equation}
always holds because of the unitarity (\ref{unitarity 2D}),
\begin{equation}
\kappa^* = -\nu
\end{equation}
automatically holds if \(0\neq\mu=\lambda\in\mathbb{R}\).
\begin{lem}
\label{hahshed operator の変換性}
Under the transformation \((s,t)\to(-s,-t)\), the diagonal components of a hashed operator of the \(x\) and \(y\) components of the spin for two-state quantum system are invariant and the non-diagonal components change their signs.
\end{lem}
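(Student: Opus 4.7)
The plan is to exploit the fact that under $(s,t)\to(-s,-t)$ every elementary factor appearing in a hashed operator of the forms (\ref{＃T1})--(\ref{＃T4}) has its diagonal entries invariant and its off-diagonal entries flipped in sign, and that this ``diagonal-even, off-diagonal-odd'' structure is closed under $2\times 2$ matrix multiplication. The lemma then follows by induction on the number of factors.

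More concretely, call a scalar function $f(s,t)$ \emph{even} if $f(-s,-t)=f(s,t)$ and \emph{odd} if $f(-s,-t)=-f(s,t)$. First I would record that the building blocks
\begin{equation}
e^{-isa_iJ_1}=\cos\!\left(\tfrac{sa_i}{2}\right)I - i\sin\!\left(\tfrac{sa_i}{2}\right)\sigma_1,\qquad
e^{-itb_iJ_2}=\cos\!\left(\tfrac{tb_i}{2}\right)I - i\sin\!\left(\tfrac{tb_i}{2}\right)\sigma_2
\end{equation}
have diagonal entries given by cosines (hence even in $(s,t)$) and off-diagonal entries given by sines up to constant prefactors (hence odd); this is already visible in the explicit forms (\ref{J1の指数関数}) and (\ref{J2の指数関数}). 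Next I would check directly that this parity pattern is closed under multiplication: if $A$ and $B$ both have even diagonals and odd off-diagonals, then $(AB)_{11}=A_{11}B_{11}+A_{12}B_{21}$ is a sum of even$\,\times\,$even and odd$\,\times\,$odd, hence even, while $(AB)_{12}=A_{11}B_{12}+A_{12}B_{22}$ is a sum of even$\,\times\,$odd and odd$\,\times\,$even, hence odd; the remaining entries are handled identically. An induction on the number of factors then shows that every hashed operator of the forms (\ref{＃T1})--(\ref{＃T4}) has even diagonal and odd off-diagonal components, which is exactly the claim. The Wigner-type operator $\hat{\#}^{\mathrm{W}}_{J_1,J_2}(s,t)$ is then covered by passing to the $N\to\infty$ limit in $(e^{-i(s/N)J_1}e^{-i(t/N)J_2})^N$, since the parity classification is preserved termwise and therefore also in the limit.

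The argument is essentially bookkeeping, so no serious obstacle is expected. The only point that requires a little care is stating the even/odd classification with respect to the \emph{joint} transformation $(s,t)\to(-s,-t)$ rather than with respect to $s$ and $t$ separately, so that a factor depending only on $s$ (or only on $t$) is still correctly labelled and the parity product rule goes through uniformly for all four orderings (\ref{＃T1})--(\ref{＃T4}).
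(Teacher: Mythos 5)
Your proof is correct and follows essentially the same route as the paper's: both establish that the ``diagonal-even, off-diagonal-odd'' pattern under \((s,t)\to(-s,-t)\) holds for the elementary factors \(e^{-isa_iJ_1}\), \(e^{-itb_iJ_2}\) and is preserved under \(2\times 2\) matrix multiplication, and then extend to the full products (\ref{＃T1})--(\ref{＃T4}). Your explicit treatment of the Wigner operator via the \(N\to\infty\) limit is a small addition the paper leaves implicit, but the argument is otherwise identical.
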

\begin{proof}
First, we point out the fact that if the two-dimensional square matrices 
\begin{equation}
A(s,t)=
\begin{pmatrix}
A_{11}(s,t) & A_{12}(s,t)\\
A_{21}(s,t) & A_{22}(s,t)
\end{pmatrix},\;\;\;
B = 
\begin{pmatrix}
B_{11}(s,t) & B_{12}(s,t)\\
B_{21}(s,t) & B_{22}(s,t)
\end{pmatrix}
\end{equation}
satisfy the property that their diagonal components are invariant and non-daigonal components change their signs as in
\begin{equation}
\label{パラメタ反転への変換性}
A(-s,-t)=
\begin{pmatrix}
A_{11}(-s,-t) & A_{12}(-s,-t)\\
A_{21}(-s,-t) & A_{22}(-s,-t)
\end{pmatrix}
=
\begin{pmatrix}
A_{11}(s,t) & -A_{12}(s,t)\\
-A_{21}(s,t) & A_{22}(s,t)
\end{pmatrix},
\end{equation}
\begin{equation}
B(-s,-t)=
\begin{pmatrix}
B_{11}(-s,-t) & B_{12}(-s,-t)\\
B_{21}(-s,-t) & B_{22}(-s,-t)
\end{pmatrix}
=
\begin{pmatrix}
B_{11}(s,t) & -B_{12}(s,t)\\
-B_{21}(s,t) & B_{22}(s,t)
\end{pmatrix},
\end{equation}
under the said transformation, the product of A and B
\begin{equation}
AB(s,t)=
\begin{pmatrix}
A_{11}(s,t)B_{11}(s,t)+A_{12}(s,t)B_{21}(s,t) & A_{11}(s,t)B_{12}(s,t)+A_{12}(s,t)B_{22}(s,t)\\
A_{21}(s,t)B_{11}(s,t)+A_{22}(s,t)B_{12}(s,t) & A_{21}(s,t)B_{12}(s,t)+A_{22}(s,t)B_{22}(s,t)
\end{pmatrix}
\end{equation}
retains the same properties.
Then since (\ref{J1の指数関数}) and (\ref{J2の指数関数}) transform as the same as (\ref{パラメタ反転への変換性}), the hashed operators (\ref{＃T1})--(\ref{＃T4}) and also transform in the same way.
\end{proof}
Now we prove Prop.~\ref{擬確率分布で状態を区別できるための条件 2D} by using the above lemmas:
\begin{proof}[Proof of Proposition \ref{擬確率分布で状態を区別できるための条件 2D}]
From Lemma \ref{z+とz-を区別できればいい}, we see that the state of a two-state quantum system cannot be distinguished by the quasi-joint-probability distribution of the \(x\) and \(y\) components of the spin \(1/2\) if and only if \(\bigl(\#_{J_1,J_2}(x,y)\bigr)_{11} = \bigl(\#_{J_1,J_2}(x,y)\bigr)_{22}\), in terms of their Fourier transforms, which is equivalent to \(\bigl(\hat{\#}_{J_1,J_2}(s,t)\bigr)_{11} = \bigl(\hat{\#}_{J_1,J_2}(s,t)\bigr)_{22}\).

From Lemma \ref{ユニタリ行列の性質}, which is equivalent to \(\bigl(\hat{\#}_{J_1,J_2}(s,t)\bigr)_{11}\in\mathbb{R},\;\bigl(\hat{\#}_{J_1,J_2}(s,t)\bigr)_{22}\in\mathbb{R}\).
Here, \(\bigl(\hat{\#}_{J_1,J_2}(s,t)\bigr)_{21}^* = -\bigl(\hat{\#}_{J_1,J_2}(s,t)\bigr)_{12}\) also holds automatically since the normalization condition for the quasi-joint-probability distribution with respect to the states \(\ket{z\pm}\) requires \(\bigl(\hat{\#}_{J_1,J_2}(s,t)\bigr)_{11}\neq0\) and \(\bigl(\hat{\#}_{J_1,J_2}(s,t)\bigr)_{22}\neq0\).
Then since \(\bigl(\hat{\#}_{J_1,J_2}(-s,-t)\bigr)_{12} = -\bigl(\hat{\#}_{J_1,J_2}(s,t)\bigr)_{12}\) from Lemma \ref{hahshed operator の変換性}, one finds \(\bigl(\hat{\#}_{J_1,J_2}(s,t)\bigr)_{21}^* = \bigl(\hat{\#}_{J_1,J_2}(-s,-t)\bigr)_{12}\) if \(\bigl(\hat{\#}_{J_1,J_2}(s,t)\bigr)_{11} = \bigl(\hat{\#}_{J_1,J_2}(s,t)\bigr)_{22}\in\mathbb{R}\).

We therefore conclude that the distinguishability is equivalent to \(\hat{\#}_{J_1,J_2}^\dagger(-s,-t) = \hat{\#}_{J_1,J_2}(s,t)\), which according to (\ref{real hashed operator}), in turn is equivalent to the Hermiticity of the quasi-joint-spectrum distribution. 
\end{proof}
This proposition is quite implicative. From Subsec.~\ref{Quasi-Classicalizations generating Real quasi-probabilities} we know that Hermitian quasi-joint-spectrum distributions quasi-classicalize every state to real-valued quasi-joint-probability distribution and quantize every physical quantity to a Hermitian operator.
Therefore, Prop.~\ref{擬確率分布で状態を区別できるための条件 2D} implies that in the two-state quantum system, in order to distinguish the states by a quasi-joint-probability distribution of the \(x\) and \(y\) components of the spin \(1/2\), we must use a quasi-joint-spectrum distributions that quasi-classicalizes a state to a complex-valued quasi-joint-probability distribution and quantizes a physical quantity to a non-Hermitian matrix.
We need the imaginary part of a quasi-joint-probability distribution to distinguish the states.

\subsection{Other observables}
\label{Other observables in 2D}
In the previous subsections, we have investigated the features of quasi-joint-probability distributions of the \(x\) and \(y\) components of the spin \(1/2\) for the two-state quantum system.
In this subsection, we will turn our attention to the quasi-joint-probability distributions of other observables.
The following proposition for the Kirkwood-Dirac distribution holds.
\begin{prop}
\label{他の物理量についてのKirkwood-Dirac関数　2D}
The state of the two-state quantum system can be distinguished by the Kirkwood-Dirac distribution of any pair of non-commutative observables.
\end{prop}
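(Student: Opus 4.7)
The plan is to reduce the problem to two spin components along non-parallel directions and then use the Pauli algebra to read off the full Bloch vector from the four Kirkwood--Dirac weights. First, since any Hermitian observable on a two-state system has the form $A=\alpha_0 I+\alpha_1\,\hat{\bm{a}}\cdot\bm{J}$ with spectral projectors depending only on the unit vector $\hat{\bm{a}}\in\mathbb{R}^3$, the Kirkwood--Dirac distribution $K_\rho$ is, up to an affine relabelling of its support points, the same as the one built from the two unit-vector observables $\hat{\bm{a}}\cdot\bm{J}$ and $\hat{\bm{b}}\cdot\bm{J}$. Thus it suffices to show injectivity of $\rho\mapsto K_\rho$ in the latter case, the hypothesis $[A,B]\neq 0$ being equivalent to $\hat{\bm{a}}\times\hat{\bm{b}}\neq\bm{0}$.

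Second, using the spectral decompositions $P^A_\sigma=(I+\sigma\,\hat{\bm{a}}\cdot\bm{\sigma})/2$ and $P^B_\tau=(I+\tau\,\hat{\bm{b}}\cdot\bm{\sigma})/2$, where $\bm{\sigma}=(\sigma_1,\sigma_2,\sigma_3)$, I would write
\[
\hat{\#}^{\mathrm{K}}_{A,B}(s,t)=e^{-isA}e^{-itB}=\sum_{\sigma,\tau=\pm}e^{-is\sigma/2-it\tau/2}\,P^A_\sigma P^B_\tau,
\]
so that, by an argument parallel to Prop.~\ref{取り得る値　２次元}, $K_\rho$ is supported on the four points $(\sigma/2,\tau/2)$ with coefficients $K_{\sigma\tau}=\mathrm{Tr}[P^A_\sigma P^B_\tau\rho]$. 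Writing $\rho=(I+\bm{r}\cdot\bm{\sigma})/2$ and applying the Pauli identity $(\hat{\bm{a}}\cdot\bm{\sigma})(\hat{\bm{b}}\cdot\bm{\sigma})=(\hat{\bm{a}}\cdot\hat{\bm{b}})I+i(\hat{\bm{a}}\times\hat{\bm{b}})\cdot\bm{\sigma}$, one obtains $K_{\sigma\tau}$ explicitly as a state-independent constant $\tfrac{1}{4}(1+\sigma\tau\,\hat{\bm{a}}\cdot\hat{\bm{b}})$ plus real linear functionals of $\bm{r}$ along $\hat{\bm{a}}$, $\hat{\bm{b}}$ and $\hat{\bm{a}}\times\hat{\bm{b}}$, with only the last direction contributing to the imaginary part.

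The conclusion then follows from elementary linear algebra: the combinations $\mathrm{Re}(K_{++})-\mathrm{Re}(K_{--})$ and $\mathrm{Re}(K_{+-})-\mathrm{Re}(K_{-+})$ together determine $\hat{\bm{a}}\cdot\bm{r}$ and $\hat{\bm{b}}\cdot\bm{r}$, while any single imaginary part, e.g.\ $\mathrm{Im}(K_{++})$, determines $(\hat{\bm{a}}\times\hat{\bm{b}})\cdot\bm{r}$. Because $A$ and $B$ do not commute, $\{\hat{\bm{a}},\hat{\bm{b}},\hat{\bm{a}}\times\hat{\bm{b}}\}$ is a basis of $\mathbb{R}^3$, so the Bloch vector $\bm{r}$, and hence the density operator $\rho$, is uniquely recovered from the four weights $\{K_{\sigma\tau}\}$. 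The main conceptual point, more a moral than an obstacle, is that the non-commutativity of $A$ and $B$ is exactly what makes $\hat{\bm{a}}\times\hat{\bm{b}}$ non-trivial and therefore what makes the imaginary part of $K_\rho$ informative, thereby generalising the message of Prop.~\ref{擬確率分布で状態を区別できるための条件 2D} to arbitrary non-commuting pairs of observables.
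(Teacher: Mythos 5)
Your proof is correct, but it follows a genuinely different route from the paper's. The paper first strips off scalar multiples and identity shifts, then uses unitary invariance to reduce to the one-parameter family $A=\sigma_3$, $B=\sigma_3\cos\varphi+\sigma_1\sin\varphi$, computes the hashed operator entrywise, and exhibits explicit inversion formulas expressing the density-matrix entries $(a,b,c)$ in terms of the four coefficients $K^{\varphi}_{\pm\pm}$. You instead keep the pair of directions $\hat{\bm{a}},\hat{\bm{b}}$ general, pass to the spectral-projector form $K_{\sigma\tau}=\mathrm{Tr}[P^A_\sigma P^B_\tau\rho]$ of the Kirkwood--Dirac weights, and use the Pauli product identity to split each weight into a state-independent constant, real linear functionals of the Bloch vector along $\hat{\bm{a}}$ and $\hat{\bm{b}}$, and an imaginary part along $\hat{\bm{a}}\times\hat{\bm{b}}$; injectivity then follows because non-commutativity makes $\{\hat{\bm{a}},\hat{\bm{b}},\hat{\bm{a}}\times\hat{\bm{b}}\}$ a basis of $\mathbb{R}^3$. (I checked that your general formula $K_{\sigma\tau}=\tfrac14\bigl(1+\sigma\tau\,\hat{\bm{a}}\cdot\hat{\bm{b}}+(\sigma\hat{\bm{a}}+\tau\hat{\bm{b}})\cdot\bm{r}+i\sigma\tau\,(\hat{\bm{a}}\times\hat{\bm{b}})\cdot\bm{r}\bigr)$ reproduces the paper's $K^{\varphi}_{\pm\pm}$ for $\hat{\bm{a}}=\hat{z}$, $\hat{\bm{b}}=\hat{z}\cos\varphi+\hat{x}\sin\varphi$.) What the paper's computation buys is a set of concrete, ready-to-use inversion formulas; what your argument buys is that it is coordinate-free, skips the unitary-reduction step entirely, and makes structurally transparent \emph{why} the result holds --- the real parts of the weights only ever see the plane spanned by $\hat{\bm{a}}$ and $\hat{\bm{b}}$, and it is precisely the non-commutativity that makes the imaginary part probe the missing third direction, which nicely generalises the message of Prop.~\ref{擬確率分布で状態を区別できるための条件 2D}. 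The only step you state somewhat tersely is the initial reduction to traceless unit-vector observables via ``affine relabelling of the support points''; this is the same rescaling-and-translation argument the paper spells out with the Fourier transform, and it is worth a line noting that the non-commutativity hypothesis already excludes the degenerate case where either observable is a multiple of the identity.
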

\begin{proof}
Two-dimensional Hermitian matrices, i.e., observables of the two-state quantum system, can be expressed as a linear combination of the identity operator and the elements of the fundamental representation of \(\mathfrak{su}(2)\):
\begin{equation}
A = a_0I + \sum_{i=1}^{3}a_i\sigma_i,\;\;\;a_i\in\mathbb{R}\;\;\mbox{for}\; i=0,1,2,3.
\end{equation}

First, we will see that it is sufficient to prove the statement about observables of limited form. 
Scalar multiplication of the observable is irrelevant to whether the states of a two-state quantum system can be distinguished by the Kirkwood-Dirac distribution, because it only expands the distribution:
\begin{align*}
K^{\alpha A,B}_\rho(x,y)
&= \frac{1}{2\pi}\mathrm{Tr}[\rho\mathcal{F}^{-1}\bigl(\hat{\#}^K_{\alpha A,B}(s,t)\bigr)(x,y)]\\
&= \frac{1}{2\pi}\mathrm{Tr}[\rho\mathcal{F}^{-1}\bigl(e^{-is\alpha A}e^{-itB}\bigr)(x,y)]\\
&= \frac{1}{2\pi}\mathrm{Tr}[\rho\mathcal{F}^{-1}\bigl(\hat{\#}^K_{A,B}(\alpha s,t)\bigr)(x,y)]\\
&=K^{A,B}_\rho(x/\alpha,y).
\stepcounter{equation}\tag{\theequation}
\end{align*}
It is also irrelevant to add the identity operator to the observable concerned for the distinguishability of the state because it only translates the distribution:
\begin{align*}
K^{A+I,B}_\rho(x,y)
&= \mathrm{Tr}[\rho\#^K_{A+I,B}(x,y)]\\
&= \frac{1}{2\pi}\mathrm{Tr}[\rho\mathcal{F}^{-1}\bigl(e^{-is(A+I)}e^{-itB}\bigr)(x,y)]\\
&= \frac{1}{2\pi}\mathrm{Tr}[\rho\mathcal{F}^{-1}\bigl(e^{-is}e^{-isA}e^{-itB}\bigr)(x,y)]\\
&= \mathrm{Tr}[\rho\#^K_{A,B}(x-1,y)]\\
&=K^{A,B}_\rho(x-1,y).
\stepcounter{equation}\tag{\theequation}
\end{align*}
Therefore, we only have to consider the observables corresponding to the elements of \(\mathfrak{su}(2)\).
Moreover, since the quasi-joint-probability distribution are invariant under unitary transformations
\begin{equation}
\rho\to U\rho U^\dagger,\;\;\;\hat{\#}\to U\hat{\#}U^\dagger,
\end{equation}
we conclude that it is sufficient to prove the statement for the pair of non-commutative observables
\begin{equation}
A = \sigma_3,\;\;\;\hat{B} = \sigma_3\cos\varphi+\sigma_1\sin\varphi,\;\;\;\varphi\neq n\pi,\;n\in\mathbb{Z}.
\end{equation}
Since
\begin{equation}
e^{-itB}=\frac{1}{2}
\begin{pmatrix}
(1+\cos\varphi)e^{-it}+(1-\cos\varphi)e^{it} & \sin\varphi(e^{-it}-e^{it})\\
\sin\varphi(e^{-it}-e^{it}) & (1-\cos\varphi)e^{-it}+(1+\cos\varphi)e^{it}
\end{pmatrix},
\end{equation}
the hashed operator reads
\begin{equation}
\hat{\#}^{K}_{A,B}(s,t)=\frac{1}{2}
\begin{pmatrix}
e^{-is}\{(1+\cos\varphi)e^{-it}+(1-\cos\varphi)e^{it}\} & e^{-is}(e^{-it}-e^{it})\sin\varphi\\
e^{is}(e^{-it}-e^{it})\sin\varphi & e^{is}\{(1-\cos\varphi)e^{-it}+(1+\cos\varphi)e^{it}\}
\end{pmatrix}.
\end{equation}
Therefore, the Kirkwood-Dirac distribution of \(A\) and \(B\) with respect to  the state (\ref{density operator 2D}) becomes
\begin{equation}
K^{A,B}_{\rho}(x,y)=\sum_{(\pm,\pm)} K^{\varphi}_{\pm\pm}\,\delta\qty(x\pm\frac{1}{2})\delta\qty(y\pm\frac{1}{2}),
\end{equation}
\begin{equation}
K^{\varphi}_{++}=\frac{1+\cos\varphi}{2}a+\frac{\sin\varphi}{2}(b+ic),
\end{equation}
\begin{equation}
K^{\varphi}_{+-}=\frac{1-\cos\varphi}{2}a-\frac{\sin\varphi}{2}(b+ic),
\end{equation}
\begin{equation}
K^{\varphi}_{-+}=\frac{1-\cos\varphi}{2}(1-a)+\frac{\sin\varphi}{2}(b-ic),
\end{equation}
\begin{equation}
K^{\varphi}_{--}=\frac{1+\cos\varphi}{2}(1-a)-\frac{\sin\varphi}{2}(b-ic).
\end{equation}
Conversely,
\begin{equation}
a=K^{\varphi}_{++}+K^{\varphi}_{+-},
\end{equation}
\begin{equation}
b=\frac{(K^{\varphi}_{++}+K^{\varphi}_{-+}-1)}{\sin\varphi}-\frac{K^{\varphi}_{++}+K^{\varphi}_{+-}}{\tan\varphi},
\end{equation}
\begin{equation}
c=\frac{1}{i\sin\varphi}\Bigl(K^{\varphi}_{++}+K^{\varphi}_{--}-\frac{1+\cos\varphi}{2}\Bigr),
\end{equation}
are determined from \(\{K^{\varphi}_{\pm\pm}\}\), which points to the distinguishability of the state by the Kirkwood-Dirac distribution.
\end{proof}
From this proposition, we see that the useful properties explored before are not only limited to the \(x\) and \(y\) components of the spin \(1/2\), which further points to the usefulness of the Kirkwood-Driac distribution in analyzing the two-state quantum systems.

\section{Quasi-Probabilities for Three-State Quantum System}
\label{Quasi-Probabilities for Three-State Quantum System}
In Sec.~\ref{Quasi-Probabilities for Two-State Quantum System}, we have explored the features of the quasi-joint-probability distribution, especially the Kirkwood-Dirac distribution, for the two-state quantum system.
From now on we will turn our attention to higher-dimensional systems and look into the features of the quasi-probabilities.
In this section, we will focus on the three-state quantum system, which is the simplest system after the two-state quantum system.
Though the three-state quantum system is much more complex than the two-state system, we will see that the Kirkwood-Dirac distribution still enjoys similar properties to the two-dimensional case.

\subsection{Independ components of a state}
\label{Independ Components of a State}
In this subsection, we will see the peculiarity of the two-state quantum system compared to higher-dimensional quantum system in analyzing it by two directions of spin and see the difference between the three-state and two-state quantum systems.

In the two-state quantum system, a state of the system corresponds one-to-one to a point in the Bloch sphere.
However, more complex systems cannot be determined only by the expectation values of three components of the spins.
In general, we need at least \(N^2-1\) observables in order to distinguish the state of the \(N\)-state quantum system by their expectation values.
In contrast to the expectation values, the Kirkwood-Dirac distributions only need two directions of spins in order to distinguish the states of the two-state quantum system.
Then, the question here is how many observables we need to distinguish the state of the three-state quantum system by the quasi-joint-probability distributions and what information about the system we can obtain from the quasi-joint-probability distributions of two directions of spins.
We will see the answers to these questions in the following subsections.

\subsection{General feature of the Kirkwood-Dirac distributions of spin 1}
\label{General Feature of Kirkwood-Dirac Distributions of Spin 1}
The spin operator for the three-state quantum system is the three-dimensional irreducible representation, which is called the spin 1 representation of \(\mathfrak{su}(2)\):
\begin{equation}
J_1^{(1)} = \frac{1}{\sqrt{2}}
\begin{pmatrix}
0 & 1 & 0\\
1 & 0 & 1\\
0 & 1 & 0
\end{pmatrix},\;\;
J_2^{(1)} = \frac{1}{\sqrt{2}}
\begin{pmatrix}
0 & -i & 0\\
i & 0 & -i\\
0 & i & 0
\end{pmatrix},\;\;
J_3^{(1)} =
\begin{pmatrix}
1 & 0 & 0\\
0 & 0 & 0\\
0 & 0 & -1
\end{pmatrix}.
\end{equation}
The value they can take is restricted to the eigenvalues of spin \(1\), namely, \(0\) and \(\pm1\).
Similarly to the two-dimensional case, the Kirkwood-Dirac distribution of the \(x\) and \(y\) components of spin 1 for the three-state quantum system enjoys the following properties.

First, we will look at the problem regarding the possible values of the observables.
\begin{prop}
\label{取り得る値　３次元}
The Kirkwood-Dirac distribution of the \(x\) and \(y\) components of spin 1 with respect to an arbitrary state of the three-state quantum system takes non-zero value only at \((J_1,J_2)=(\pm1,\pm1),\,(\pm1,0),\,(0,\pm1)\):
\begin{equation}
^\forall\rho,\; K_{\rho}(x,y) = 0\;\;\bigl(\mbox{for}\;(x,y)\neq(\pm1,\pm1),\,(\pm1,0),\,(0,\pm1)\bigl).
\end{equation}
\end{prop}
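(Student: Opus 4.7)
The plan is to mirror the proof of Proposition~\ref{取り得る値　２次元}, but to exploit the spectral decomposition of the spin-$1$ operators directly rather than grinding through $3\times 3$ matrices of trigonometric functions. Writing $J_1^{(1)} = \sum_{\alpha \in \{-1,0,+1\}} \alpha\, P_\alpha^{(1)}$ and $J_2^{(1)} = \sum_{\beta \in \{-1,0,+1\}} \beta\, P_\beta^{(2)}$ gives $e^{-isJ_1^{(1)}} = \sum_\alpha e^{-is\alpha}\, P_\alpha^{(1)}$, and similarly for $J_2^{(1)}$, so multiplying yields
\[
\hat{\#}^{\mathrm{K}}_{J_1,J_2}(s,t) = \sum_{\alpha,\beta \in \{-1,0,+1\}} e^{-is\alpha - it\beta}\, P_\alpha^{(1)} P_\beta^{(2)}.
\]
Termwise inverse Fourier transformation then produces
\[
\#^{\mathrm{K}}_{J_1,J_2}(x,y) = \sum_{\alpha,\beta \in \{-1,0,+1\}} P_\alpha^{(1)} P_\beta^{(2)}\, \delta(x-\alpha)\,\delta(y-\beta),
\]
and tracing against an arbitrary density operator $\rho$ gives
\[
K_\rho(x,y) = \sum_{\alpha,\beta \in \{-1,0,+1\}} \mathrm{Tr}\bigl[P_\alpha^{(1)} P_\beta^{(2)}\,\rho\bigr]\, \delta(x-\alpha)\,\delta(y-\beta),
\]
so that the support is automatically contained in the nine-point grid $\{-1,0,+1\}^2$.

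To obtain the sharper eight-point support actually asserted by the proposition — that is, to exclude $(0,0)$ — it remains to verify that the coefficient $\mathrm{Tr}[P_0^{(1)} P_0^{(2)}\,\rho]$ vanishes for every $\rho$, which is equivalent to $P_0^{(1)} P_0^{(2)} = 0$. I would do this by computing the normalized zero-eigenvectors explicitly: the zero-eigenvector of $J_1^{(1)}$ is $\ket{0_x} \propto (1,0,-1)$ and that of $J_2^{(1)}$ is $\ket{0_y} \propto (1,0,1)$. A one-line inner-product check gives $\langle 0_x | 0_y \rangle = 0$, whence $P_0^{(1)} P_0^{(2)} = \ket{0_x}\langle 0_x | 0_y \rangle\langle 0_y| = 0$, as required.

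Conceptually, the main obstacle is noticing that the statement asserts an \emph{eight}-point support rather than the nine-point grid one would naively expect from the eigenvalue spectrum $\{-1,0,+1\}$ of each spin component, and then tracing this sharper claim to the non-generic algebraic fact that the zero-eigenstates of $J_x$ and $J_y$ in the spin-$1$ representation are mutually orthogonal. Everything else — the spectral decomposition, the termwise Fourier inversion, and the trace against $\rho$ — is routine linear algebra, and in fact the same argument generalizes verbatim to any pair of observables whose zero-eigenspaces happen to be mutually orthogonal.
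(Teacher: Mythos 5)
Your proof is correct, and it takes a genuinely different and arguably more illuminating route than the paper's. The paper proceeds by brute force: it writes out $e^{-isJ_1^{(1)}}$ and $e^{-itJ_2^{(1)}}$ as explicit $3\times 3$ matrices of exponentials, multiplies them, inverts the Fourier transform component by component, and then exhibits all nine coefficients $K^{(1)}_{\alpha\beta}$ for a general eight-parameter density matrix, observing in passing that $K^{(1)}_{00}=0$. Your spectral-decomposition argument reaches the same conclusion with almost no computation and, more importantly, isolates \emph{why} the ninth grid point drops out: the coefficient at $(\alpha,\beta)$ is $\mathrm{Tr}[P^{(1)}_\alpha P^{(2)}_\beta\rho]$, and the zero-eigenvectors $(1,0,-1)$ and $(1,0,1)$ of $J_1^{(1)}$ and $J_2^{(1)}$ are orthogonal, so $P^{(1)}_0P^{(2)}_0=0$ identically. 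This explanation is invisible in the paper's explicit computation, where $K^{(1)}_{00}=0$ appears as an unexplained numerical coincidence. Your approach is essentially the one the paper itself later adopts for the general $N$-state proposition on possible values, but that general argument only bounds the support by the full grid of eigenvalue pairs; your orthogonality observation is exactly the extra ingredient needed to get the sharper eight-point support claimed here, and you are right that it is the only non-routine step. One caveat on your closing remark: the generalization to other observable pairs excludes a grid point $(\alpha,\beta)$ precisely when $P^{(1)}_\alpha P^{(2)}_\beta=0$, i.e.\ when the corresponding eigenspaces (not only zero-eigenspaces) are orthogonal, so the phenomenon is about mutual unbiasedness failing at particular eigenvalue pairs rather than about zero eigenvalues per se.
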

\begin{proof}
The exponential function of the spin \(x\) and \(y\) of spin 1 representation is
\begin{equation}
e^{-isJ_1^{(1)}}=\frac{1}{2}
\begin{pmatrix}
\frac{1}{2}(e^{-is}+e^{is})+1 & \frac{1}{\sqrt{2}}(e^{-is}-e^{is}) & \frac{1}{2}(e^{-is}+e^{is})-1\\
\frac{1}{\sqrt{2}}(e^{-is}-e^{is}) & e^{-is}+e^{is} & \frac{1}{\sqrt{2}}(e^{-is}-e^{is})\\
\frac{1}{2}(e^{-is}+e^{is})-1 & \frac{1}{\sqrt{2}}(e^{-is}-e^{is}) & \frac{1}{2}(e^{-is}+e^{is})-1\\
\end{pmatrix},
\end{equation}
\begin{equation}
e^{-itJ_2^{(1)}}=\frac{1}{4}
\begin{pmatrix}
\frac{1}{2}(e^{-it}+e^{it})+1 & \frac{-i}{\sqrt{2}}(e^{-it}-e^{it}) & \frac{-1}{2}(e^{-it}+e^{it})+1\\
\frac{i}{\sqrt{2}}(e^{-it}-e^{it}) & e^{-it}+e^{it} & \frac{-i}{\sqrt{2}}(e^{-it}-e^{it})\\
\frac{-1}{2}(e^{-it}+e^{it})+1 & \frac{i}{\sqrt{2}}(e^{-it}-e^{it}) & \frac{1}{2}(e^{-it}+e^{it})-1\\
\end{pmatrix}.
\end{equation}
The hashed operator for the Kirkwood-Dirac distribution of the \(x\) and \(y\) components of the spin 1 is defined in the same manner as the two-dimensional case:
\begin{equation}
\hat{\#}^{K}_{J_1^{(1)},J_2^{(1)}}(x,y)\coloneqq e^{-isJ_1^{(1)}}e^{-itJ_2^{(1)}}.
\end{equation}
Then the components of the hashed operator read
\begin{align}
\Bigl(\hat{\#}^{K}_{J_1^{(1)},J_2^{(1)}}(x,y)\Bigr)_{11}
&=\frac{i}{8}(e^{-is}-e^{is})(e^{-it}-e^{it})+\frac{1}{4}(e^{-is}+e^{is})+\frac{1}{4}(e^{-it}+e^{it}),\\
\Bigl(\hat{\#}^{K}_{J_1^{(1)},J_2^{(1)}}(x,y)\Bigr)_{12}
&=\frac{1}{4\sqrt{2}}(e^{-is}-e^{is})(e^{-it}+e^{it})-\frac{i}{2\sqrt{2}}(e^{-it}-e^{it}),\\
\Bigl(\hat{\#}^{K}_{J_1^{(1)},J_2^{(1)}}(x,y)\Bigr)_{13}
&=\frac{-i}{8}(e^{-is}-e^{is})(e^{-it}-e^{it})+\frac{1}{4}(e^{-is}+e^{is})-\frac{1}{4}(e^{-it}+e^{it}),\\
\Bigl(\hat{\#}^{K}_{J_1^{(1)},J_2^{(1)}}(x,y)\Bigr)_{21}
&=\frac{i}{4\sqrt{2}}(e^{-is}+e^{is})(e^{-it}-e^{it})+\frac{1}{2\sqrt{2}}(e^{-is}-e^{is}),\\
\Bigl(\hat{\#}^{K}_{J_1^{(1)},J_2^{(1)}}(x,y)\Bigr)_{22}
&=\frac{1}{4}(e^{-is}+e^{is})(e^{-it}+e^{it}),\\
\Bigl(\hat{\#}^{K}_{J_1^{(1)},J_2^{(1)}}(x,y)\Bigr)_{23}
&=\frac{-i}{4\sqrt{2}}(e^{-is}+e^{is})(e^{-it}-e^{it})+\frac{1}{2\sqrt{2}}(e^{-is}-e^{is}),\\
\Bigl(\hat{\#}^{K}_{J_1^{(1)},J_2^{(1)}}(x,y)\Bigr)_{31}
&=\frac{i}{8}(e^{-is}-e^{is})(e^{-it}-e^{it})+\frac{1}{4}(e^{-is}+e^{is})-\frac{1}{4}(e^{-it}+e^{it}),\\
\Bigl(\hat{\#}^{K}_{J_1^{(1)},J_2^{(1)}}(x,y)\Bigr)_{32}
&=\frac{1}{4\sqrt{2}}(e^{-is}-e^{is})(e^{-it}+e^{it})+\frac{i}{2\sqrt{2}}(e^{-it}-e^{it}),\\
\Bigl(\hat{\#}^{K}_{J_1^{(1)},J_2^{(1)}}(x,y)\Bigr)_{33}
&=\frac{-i}{8}(e^{-is}-e^{is})(e^{-it}-e^{it})+\frac{1}{4}(e^{-is}+e^{is})+\frac{1}{4}(e^{-it}+e^{it}).
\end{align}
On the other hand, the density operator of the three-state quantum system is generally expressed as
\begin{equation}
\hat{\rho}=
\begin{pmatrix}
a & b-ic & d-if\\
b+ic & 1-a-g & h-ik\\
d+if & h+ik & g
\end{pmatrix},
\end{equation}
for eight variables \(a,b,c,d,f,g,h,k\in\mathbb{R}\).
Therefore, the Kirkwood-Dirac distribution with respect to the state \(\hat{\rho}\) is calculated as
\begin{align}
\label{デルタ関数の線形結合　３次元}
K_{\rho}(x,y)&=\sum_{\alpha,\beta=-1,0,+1}K^{(1)}_{\alpha\beta}\delta(x-\alpha)\delta(x-\beta),\\
\label{K++ 3D}
K^{(1)}_{++}&=\frac{1}{4}-\frac{2-i}{8}a-\frac{2+i}{8}g+\frac{1+i}{4\sqrt{2}}(b+c)+\frac{1}{4}f+\frac{1-i}{4\sqrt{2}}(h+k),\\
\label{K+0 3D}
K^{(1)}_{+0}&=\frac{1}{4}(a+g)+\frac{1}{2\sqrt{2}}(b-ic)+\frac{1}{2}d+\frac{1}{2\sqrt{2}}(h+ik),\\
\label{K+- 3D}
K^{(1)}_{+-}&=\frac{1}{4}-\frac{2+i}{8}a-\frac{2-i}{8}g+\frac{1-i}{4\sqrt{2}}(b-c)-\frac{1}{4}f+\frac{1+i}{4\sqrt{2}}(h-k),\\
\label{K0+ 3D}
K^{(1)}_{0+}&=\frac{1}{4}(a+g)-\frac{i}{2\sqrt{2}}(b+ic)-\frac{1}{2}d+\frac{i}{2\sqrt{2}}(h-ik),\\
\label{K00 3D}
K^{(1)}_{00}&=0,\\
\label{K0- 3D}
K^{(1)}_{0-}&=\frac{1}{4}(a+g)+\frac{i}{2\sqrt{2}}(b+ic)-\frac{1}{2}d-\frac{i}{2\sqrt{2}}(h-ik),\\
\label{K-+ 3D}
K^{(1)}_{-+}&=\frac{1}{4}-\frac{2+i}{8}a-\frac{2-i}{8}g-\frac{1-i}{4\sqrt{2}}(b-c)-\frac{1}{4}f-\frac{1+i}{4\sqrt{2}}(h-k),\\
\label{K-0 3D}
K^{(1)}_{-0}&=\frac{1}{4}(a+g)-\frac{1}{2\sqrt{2}}(b-ic)+\frac{1}{2}d-\frac{1}{2\sqrt{2}}(h+ik),\\
\label{K-- 3D}
K^{(1)}_{--}&=\frac{1}{4}-\frac{2-i}{8}a-\frac{2+i}{8}g-\frac{1+i}{4\sqrt{2}}(b+c)+\frac{1}{4}f-\frac{1-i}{4\sqrt{2}}(h+k).
\end{align}
Since the Kirkwood-Dirac distribution is a linear combination of the delta functions as seen in Eq.~(\ref{デルタ関数の線形結合　３次元}), it takes non-zero values only at \((\pm1,\pm1),\,(\pm1,0),\,(0,\pm1)\).
\end{proof}
This proposition indicates that the property of the Kirkwood-Dirac distribution that it takes non-zero values only at the possible values of the observables concerned also holds in three-state quantum system. 

The second topic is what information of the three-state quantum system we can obtain from the quasi-joint-probability distribution.
\begin{prop}
\label{状態を区別できる　3D}
The state of a three-state quantum system is completely distinguishable by the Kirkwood-Dirac distribution of the \(x\) and \(y\) components of spin 1:
\begin{equation}
\rho=\rho' \iff ^\forall(x,y)\in\mathbb{R},\,K_{\rho}(x,y)=K_{\rho'}(x,y).
\end{equation}
\end{prop}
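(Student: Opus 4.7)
My plan is to imitate the proof of Proposition~\ref{状態を区別できる　２次元スピン} from the two-state case. By Proposition~\ref{取り得る値　３次元}, the Kirkwood-Dirac distribution is the finite sum of deltas (\ref{デルタ関数の線形結合　３次元}), so $K_{\rho}(x,y)=K_{\rho'}(x,y)$ for every $(x,y)\in\mathbb{R}^2$ is equivalent to the coincidence of the nine coefficients $K^{(1)}_{\alpha\beta}$ with $(\alpha,\beta)\in\{-1,0,+1\}^2$. Equation~(\ref{K00 3D}) gives $K^{(1)}_{00}\equiv 0$ regardless of $\rho$, so it suffices to analyze the remaining eight coefficients.

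The density operator of a three-state system is pinned down by the eight real parameters $(a,b,c,d,f,g,h,k)$ appearing in the parameterization already introduced in the proof of Proposition~\ref{取り得る値　３次元}. The explicit formulas (\ref{K++ 3D})--(\ref{K-- 3D}) exhibit each of the eight non-trivial $K^{(1)}_{\alpha\beta}$ as a complex-valued $\mathbb{R}$-linear functional of $(a,b,c,d,f,g,h,k)$. The proposition therefore reduces to showing that this $\mathbb{R}$-linear map is injective, which is most directly established by solving for each of $a,b,c,d,f,g,h,k$ as an explicit linear combination of the $K^{(1)}_{\alpha\beta}$.

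I would organize the inversion by noting that the corner coefficients $K^{(1)}_{\pm\pm}$ depend on $(a,g,b,c,f,h,k)$ but not $d$, whereas the edge coefficients $K^{(1)}_{\pm 0}$ and $K^{(1)}_{0\pm}$ depend on $(a,g,b,c,d,h,k)$ but not $f$. Thus real and imaginary parts of appropriate sums and differences of corner terms — for instance $K^{(1)}_{++}\pm K^{(1)}_{--}$ and $K^{(1)}_{+-}\pm K^{(1)}_{-+}$ — separate the combinations $a+g$, $a-g$, $f$, $b+c$, $b-c$, $h+k$, $h-k$; once these are known, the edge coefficients immediately yield $d$. The main obstacle is simply the algebraic bookkeeping of this $8\times 8$ real linear system, which is routine but considerably longer than the three-variable inversion used in the two-state case. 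Along the way one should also check that the chosen combinations are consistent with the redundancies forced by $\mathrm{Tr}\,\rho=1$ and by the fact that the marginals (\ref{marginal probability distributions}) coincide with the genuine probability distributions of $J_1$ and $J_2$; these furnish useful sanity checks rather than new information.
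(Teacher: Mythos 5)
Your proposal is correct and follows essentially the same route as the paper: both reduce the claim to the injectivity of the $\mathbb{R}$-linear map sending the eight real parameters $(a,g,b,c,d,f,h,k)$ of $\rho$ to the coefficients $K^{(1)}_{\alpha\beta}$ of Eqs.~(\ref{K++ 3D})--(\ref{K-- 3D}). The paper establishes injectivity by writing the map as an explicit $16\times 8$ real matrix $T$ and asserting $\mathrm{rank}(T)=8$, whereas you carry out the same verification by a structured elimination (corner sums and differences yielding $a\pm g$, $f$, $b\pm c$, $h\pm k$, then the edge coefficients yielding $d$), which is merely a different bookkeeping of the same rank computation.
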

\begin{proof}
Relations (\ref{K++ 3D}) - (\ref{K-- 3D}) can be rewritten as
\begin{equation}
\label{Kirkwood-Dirac変換}
\bm{K^{(1)}} = T\bm{\rho},
\end{equation}
where
\begin{equation}
\label{Kirkwood-Dirac関数　行列表示}
\bm{K^{(1)}}=
\begin{pmatrix}
\mathrm{Re}(K^{(1)}_{++})-\frac{1}{4}\\
\mathrm{Im}(K^{(1)}_{++})\\
\mathrm{Re}(K^{(1)}_{+0})\\
\mathrm{Im}(K^{(1)}_{+0})\\
\mathrm{Re}(K^{(1)}_{+-})-\frac{1}{4}\\
\mathrm{Im}(K^{(1)}_{+-})\\
\mathrm{Re}(K^{(1)}_{0+})\\
\mathrm{Im}(K^{(1)}_{0+})\\
\mathrm{Re}(K^{(1)}_{0-})\\
\mathrm{Im}(K^{(1)}_{0-})\\
\mathrm{Re}(K^{(1)}_{-+})-\frac{1}{4}\\
\mathrm{Im}(K^{(1)}_{-+})\\
\mathrm{Re}(K^{(1)}_{-0})\\
\mathrm{Im}(K^{(1)}_{-0})\\
\mathrm{Re}(K^{(1)}_{--})-\frac{1}{4}\\
\mathrm{Im}(K^{(1)}_{--})\\
\end{pmatrix},\,
T=
\begin{pmatrix}
-\frac{1}{4} & -\frac{1}{4} & \frac{1}{4\sqrt{2}} & \frac{1}{4\sqrt{2}} & 0 & \frac{1}{4} & \frac{1}{4\sqrt{2}} & \frac{1}{4\sqrt{2}}\\
\frac{1}{8} & -\frac{1}{8} & \frac{1}{4\sqrt{2}} & \frac{1}{4\sqrt{2}} & 0 & 0 & -\frac{1}{4\sqrt{2}} & -\frac{1}{4\sqrt{2}}\\
\frac{1}{4} & \frac{1}{4} & \frac{1}{2\sqrt{2}} & 0 & \frac{1}{2} & 0 & \frac{1}{2\sqrt{2}} & 0\\
0 & 0 & 0 & -\frac{1}{2\sqrt{2}} & 0 & 0 & 0 & \frac{1}{2\sqrt{2}}\\
-\frac{1}{4} & -\frac{1}{4} & \frac{1}{4\sqrt{2}} & -\frac{1}{4\sqrt{2}} & 0 & -\frac{1}{4} & \frac{1}{4\sqrt{2}} & -\frac{1}{4\sqrt{2}}\\
-\frac{1}{8} & \frac{1}{8} & -\frac{1}{4\sqrt{2}} & \frac{1}{4\sqrt{2}} & 0 & 0 & \frac{1}{4\sqrt{2}} & -\frac{1}{4\sqrt{2}}\\
\frac{1}{4} & \frac{1}{4} & 0 & \frac{1}{2\sqrt{2}} & -\frac{1}{2} & 0 & 0 & \frac{1}{2\sqrt{2}}\\
0 & 0 & -\frac{1}{2\sqrt{2}} & 0 & 0 & 0 & \frac{1}{2\sqrt{2}} & 0\\
\frac{1}{4} & \frac{1}{4} & 0 & -\frac{1}{2\sqrt{2}} & -\frac{1}{2} & 0 & 0 & -\frac{1}{2\sqrt{2}}\\
0 & 0 & \frac{1}{2\sqrt{2}} & 0 & 0 & 0 & -\frac{1}{2\sqrt{2}} & 0\\
-\frac{1}{4} & -\frac{1}{4} & -\frac{1}{4\sqrt{2}} & \frac{1}{4\sqrt{2}} & 0 & -\frac{1}{4} & -\frac{1}{4\sqrt{2}} & \frac{1}{4\sqrt{2}}\\
-\frac{1}{8} & \frac{1}{8} & \frac{1}{4\sqrt{2}} & -\frac{1}{4\sqrt{2}} & 0 & 0 & -\frac{1}{4\sqrt{2}} & \frac{1}{4\sqrt{2}}\\
\frac{1}{4} & \frac{1}{4} & -\frac{1}{2\sqrt{2}} & 0 & \frac{1}{2} & 0 & -\frac{1}{2\sqrt{2}} & 0\\
0 & 0 & 0 & \frac{1}{2\sqrt{2}} & 0 & 0 & 0 & -\frac{1}{2\sqrt{2}}\\
-\frac{1}{4} & -\frac{1}{4} & -\frac{1}{4\sqrt{2}} & -\frac{1}{4\sqrt{2}} & 0 & \frac{1}{4} & -\frac{1}{4\sqrt{2}} & -\frac{1}{4\sqrt{2}}\\
\frac{1}{8} & -\frac{1}{8} & -\frac{1}{4\sqrt{2}} & -\frac{1}{4\sqrt{2}} & 0 & 0 & \frac{1}{4\sqrt{2}} & \frac{1}{4\sqrt{2}}\\
\end{pmatrix},\,
\bm{\rho}=
\begin{pmatrix}
a\\
g\\
b\\
c\\
d\\
f\\
h\\
k\\
\end{pmatrix}
\end{equation}
with \(a,g,b,c,d,f,h,k\in\mathbb{R}\).
Since
\begin{equation}
\mathrm{rank}(T)=8,
\end{equation}
we can determine \(\bm{\rho}\) with eight variables from \(\bm{K^{(1)}}\).
\end{proof}
Proposition \ref{状態を区別できる　3D} shows further usefulness of the Kirkwood-Dirac distribution.
We can know from the Kirkwood-Dirac distribution of the \(x\) and \(y\) components of the spin 1 not only the expectation values of the remaining directions of the spin, but also the whole information about the state of the three-state system.
In other words, we need only two observables in order to distinguish the state of the three-state quantum system by the Kirkwood-Dirac distributions while we need eight observables by the expectation values.

The third question is with respect to what state the Kirkwood-Dirac distribution for three-state quantum system becomes real.
\begin{prop}
The expectation value of the z components of the spin of a three-state quantum system becomes zero when the Kirkwood-Dirac distribution of the \(x\) and \(y\) components of the spin 1 for the system is real:
\begin{equation}
^\forall(x,y)\in\mathbb{R}^2,\,K_{\rho}(x,y)\in\mathbb{R}
\implies \langle J_3^{(1)}\rangle_{\rho}=0.
\end{equation}
\end{prop}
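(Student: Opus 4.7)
The plan is to exploit the explicit expressions (\ref{K++ 3D})--(\ref{K-- 3D}) for the coefficients of the Kirkwood-Dirac distribution together with Proposition \ref{取り得る値　３次元}, which guarantees that $K_{\rho}(x,y)$ is a linear combination of delta functions supported at the eight possible values $(\pm1,\pm1),(\pm1,0),(0,\pm1)$. Under this representation, the hypothesis that $K_{\rho}(x,y)\in\mathbb{R}$ for all $(x,y)\in\mathbb{R}^2$ is equivalent to the statement that each of the coefficients $K^{(1)}_{\alpha\beta}$ is real, i.e., $\mathrm{Im}(K^{(1)}_{\alpha\beta})=0$ for every $(\alpha,\beta)$ at which the distribution is supported. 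In particular, this must hold for $(\alpha,\beta)=(+,+)$ and $(\alpha,\beta)=(-,-)$.

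First, I would read off the imaginary parts of $K^{(1)}_{++}$ and $K^{(1)}_{--}$ from (\ref{K++ 3D}) and (\ref{K-- 3D}). A direct expansion gives
\begin{align*}
\mathrm{Im}(K^{(1)}_{++}) &= \tfrac{a}{8}-\tfrac{g}{8}+\tfrac{b+c}{4\sqrt{2}}-\tfrac{h+k}{4\sqrt{2}},\\
\mathrm{Im}(K^{(1)}_{--}) &= \tfrac{a}{8}-\tfrac{g}{8}-\tfrac{b+c}{4\sqrt{2}}+\tfrac{h+k}{4\sqrt{2}}.
\end{align*}
The key observation is that these two expressions are built from the same $a,g$ dependence but with opposite signs in front of the $(b+c)$ and $(h+k)$ terms, so adding them cancels the off-diagonal contributions entirely, leaving $\tfrac{a}{8}-\tfrac{g}{8}+\tfrac{a}{8}-\tfrac{g}{8}=0$, i.e., $a=g$.

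Finally, I would compute $\langle J_3^{(1)}\rangle_\rho=\mathrm{Tr}(J_3^{(1)}\rho)$ directly from the matrix representations of $J_3^{(1)}$ and $\rho$ given in the preceding text. Since $J_3^{(1)}=\mathrm{diag}(1,0,-1)$ and the diagonal of $\rho$ is $(a,1-a-g,g)$, the trace yields $\langle J_3^{(1)}\rangle_\rho=a-g$, and the condition $a=g$ extracted above immediately gives $\langle J_3^{(1)}\rangle_\rho=0$, which is the desired conclusion.

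I do not anticipate a serious obstacle here: the proof is essentially a one-line computation once one identifies the right pair of coefficients to combine. The only non-routine part is recognising that the pair $(K^{(1)}_{++},K^{(1)}_{--})$ (or equally $(K^{(1)}_{+-},K^{(1)}_{-+})$) is designed so that the imaginary parts isolate the single scalar $a-g$, which is precisely $\langle J_3^{(1)}\rangle_\rho$. It is also worth noting that, unlike the two-state analogue (Proposition \ref{xy平面にいる　２次元}), only the forward implication is claimed; indeed, one expects the converse to fail since the realness of $K_\rho$ imposes further constraints on $b,c,d,f,h,k$ beyond $\langle J_3^{(1)}\rangle_\rho=0$, but this asymmetry is consistent with the fact that a three-state density matrix has eight real parameters whereas the two-state case has only three.
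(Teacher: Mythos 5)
Your proposal is correct and follows essentially the same route as the paper: the paper's proof likewise reads the reality condition off the explicit coefficients (via the matrix $T$ in Eq.~(\ref{Kirkwood-Dirac関数　行列表示}), obtaining $a=g,\,b=h,\,c=k$) and then uses $\langle J_3^{(1)}\rangle_\rho=a-g$. Your version is just slightly more economical in that you extract only the needed consequence $a=g$ from the pair $\mathrm{Im}(K^{(1)}_{++})$, $\mathrm{Im}(K^{(1)}_{--})$ rather than the full characterization, and your closing remark about the failure of the converse matches the paper's own observation.
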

\begin{proof}
From Eqs.~(\ref{Kirkwood-Dirac変換}) and (\ref{Kirkwood-Dirac関数　行列表示}), we can easily see
\begin{equation}
^\forall(x,y)\in\mathbb{R}^2,\,K_{\rho}(x,y)\in\mathbb{R}
\iff a=g,\,b=h,\,c=k.
\end{equation}
On the other hand, the expectation value of the \(z\) components of the spin 1 of a three-state quantum system is
\begin{equation}
\langle J_3^{(1)}\rangle_{\rho}=\mathrm{Tr}[J_3^{(1)}\rho]=a-g.
\end{equation}
Thus,
\begin{equation}
\langle J_3^{(1)}\rangle_{\rho}=0 \iff a=g.
\end{equation}
Therefore,
\begin{equation}
^\forall(x,y)\in\mathbb{R}^2,\,K_{\rho}(x,y)\in\mathbb{R}
\implies \langle J_3^{(1)}\rangle_{\rho}=0.
\end{equation}
\end{proof}
It is to be remarked that the inverse of this statement does not necessarily hold.
The expectation value of the \(z\) components of the spin 1 for a state with respect to which the Kirkwood-Dirac distribution is real always becomes zero, but the Kirkwood-Dirac distribution with respect to the state whose expectation value of the \(z\) components of the spin 1 is zero is not always become real.

\subsection{Other observables}
\label{Other Observables in 3D}
In Subsec.~\ref{General Feature of Kirkwood-Dirac Distributions of Spin 1}, we have investigated the features of the Kirkwood-Dirac distribution of the \(x\) and \(y\) components of spin 1.
In this subsection, we will turn our attention to the Kirkwood-Dirac distribution of other observables.
\begin{prop}
\label{他の物理量についてのKirkwood-Dirac関数　3D}
The state of the three-state quantum system can be distinguished by the Kirkwood-Dirac distribution of any non-commutative pair of observables in the spin 1 representation of \(\mathfrak{su}(2)\).
\end{prop}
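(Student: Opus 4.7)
The plan is to adapt the strategy of Prop.~\ref{他の物理量についてのKirkwood-Dirac関数　2D} to the spin-1 setting. Write $A=\bm{a}\cdot\bm{J}^{(1)}$ and $B=\bm{b}\cdot\bm{J}^{(1)}$ with $\bm{a},\bm{b}\in\mathbb{R}^3$; since $[A,B]\propto(\bm{a}\times\bm{b})\cdot\bm{J}^{(1)}$, non-commutativity is equivalent to linear independence of $\bm{a}$ and $\bm{b}$. A scaling $A\to\alpha A$ merely dilates the Kirkwood-Dirac distribution along the $x$-axis, so we may normalize $|\bm{a}|=|\bm{b}|=1$ without affecting distinguishability. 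Exploiting the unitary covariance $K^{UAU^\dagger,UBU^\dagger}_{U\rho U^\dagger}=K^{A,B}_{\rho}$ with $U$ in the spin-1 representation of $SU(2)$, we then rotate so that $\bm{a}=\hat{z}$ and $\bm{b}$ lies in the $xz$-plane, reducing the problem to the canonical pair
\begin{equation}
A=J_3^{(1)},\qquad B=\cos\varphi\,J_3^{(1)}+\sin\varphi\,J_1^{(1)},\qquad \sin\varphi\neq 0.
\end{equation}

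For this canonical pair both observables have non-degenerate spectrum $\{-1,0,+1\}$, and the Kirkwood-Dirac coefficient at $(\alpha,\beta)$ simplifies to
\begin{equation}
K_{\alpha\beta}(\rho) = U_{\alpha\beta}\,\langle\beta_B|\rho|\alpha_A\rangle,\qquad U_{\alpha\beta}:=\langle\alpha_A|\beta_B\rangle,
\end{equation}
where $U=d^1(\varphi)$ is the Wigner small-$d$ matrix for spin 1. Distinguishability is then equivalent to injectivity of the linear map $\rho\mapsto(K_{\alpha\beta}(\rho))_{\alpha,\beta}$ restricted to the eight-dimensional space of traceless Hermitian $3\times 3$ matrices, so it suffices to show that any traceless Hermitian $\sigma$ satisfying $K_{\alpha\beta}(\sigma)=0$ for all $\alpha,\beta$ must vanish. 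Whenever $U_{\alpha\beta}\neq 0$ one obtains $\langle\beta_B|\sigma|\alpha_A\rangle=0$ immediately.

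The main obstacle is the degenerate case $\cos\varphi=0$. For $\varphi\neq n\pi$ with $\cos\varphi\neq 0$, inspection of $d^1(\varphi)$ shows that all nine entries are nonzero, so every matrix element of $\sigma$ in the mixed basis vanishes and $\sigma=0$ follows at once. At $\cos\varphi=0$ (which includes the configuration already treated in Prop.~\ref{状態を区別できる　3D}) only the central entry $U_{00}=\langle 0_A|0_B\rangle$ vanishes, so the remaining eight constraints leave $\sigma=c\,|0_B\rangle\langle 0_A|$ a priori free; however, $|0_A\rangle\perp|0_B\rangle$ at this angle, so the operators $|0_B\rangle\langle 0_A|$ and $|0_A\rangle\langle 0_B|$ are linearly independent in the Hilbert-Schmidt sense, and the Hermiticity $\sigma=\sigma^\dagger$ therefore forces $c=0$. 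This closes the argument for every $\varphi\neq n\pi$, thereby establishing injectivity and hence the proposition.
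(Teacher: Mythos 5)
Your proof is correct, but it proceeds by a genuinely different route from the paper's. The paper, after the same reduction to the canonical pair $A=J_3^{(1)}\cos\varphi+J_1^{(1)}\sin\varphi$, $B=J_3^{(1)}$, computes all nine Kirkwood-Dirac coefficients explicitly, assembles the real and imaginary parts into an $18\times 8$ real matrix $T^{\varphi}$ acting on the eight parameters of the density matrix, and verifies $\mathrm{rank}(T^{\varphi})=8$ for $\varphi\neq n\pi$. You instead exploit the factorized form $K_{\alpha\beta}(\rho)=\langle\alpha_A|\beta_B\rangle\langle\beta_B|\rho|\alpha_A\rangle$ valid for non-degenerate observables, observe that the operators $|\beta_B\rangle\langle\alpha_A|$ form a basis of the operator space, and reduce injectivity to the non-vanishing of the overlaps $\langle\alpha_A|\beta_B\rangle$, i.e.\ of the entries of the spin-$1$ Wigner $d$-matrix; the single vanishing entry at $\cos\varphi=0$ is then disposed of by the Hermiticity of $\sigma$ together with the orthogonality $\langle 0_A|0_B\rangle=0$. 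Your argument is more conceptual and buys more: it makes transparent \emph{why} two non-degenerate observables suffice (the Kirkwood--Dirac distribution is informationally complete whenever no eigenvector of $A$ is orthogonal to an eigenvector of $B$, and Hermiticity can absorb isolated orthogonal pairs), and it generalizes immediately to higher spin representations, which the paper leaves open in Sec.~\ref{Summary and Discussion}. What the paper's brute-force computation buys in exchange is the explicit matrix $T^{\varphi}$, i.e.\ the concrete reconstruction formula for $\rho$ from the measured quasi-probabilities, which your injectivity argument does not provide. Both treatments rely on the same preliminary reductions (scaling, translation by the identity, and unitary covariance), so no gap arises there.
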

\begin{proof}
The observables in the spin 1 representation of \(\mathfrak{su}(2)\) is generally given as
\begin{equation}
\hat{A}=\sum_{i=1}^3 a_i\hat{\sigma}_i
\end{equation}
where \(a_i\in\mathbb{R}\) for \(i=1,2,3\) with \(\sum_{i=1}^3a_i=1\).
By the same reasoning as in the proof of Prop.~\ref{他の物理量についてのKirkwood-Dirac関数　2D}, it is sufficient to prove for the pair of observables in the form
\begin{equation}
\hat{A}=\hat{J}_3^{(1)}\cos\varphi+\hat{J}_1^{(1)}\sin\varphi,\;\hat{B}=\hat{J}_3^{(1)},\;\;\;\varphi\neq n\pi,\,n\in\mathbb{Z}
\end{equation}
since the quais-joint-probability distribution is invariant under unitary transformations and the multiplication by scalar only scales the distribution.
The Kirkwood-Dirac distribution generated from the hashed operator
\begin{equation}
\hat{\#}^{\mathrm{K}}_{A,B}=e^{-is(J_3^{(1)}\cos\varphi+J_1^{(1)}\sin\varphi)}e^{-itJ_3^{(1)}}
\end{equation}
is given by
\begin{equation}
K_{\rho}^{\varphi}(x,y)=\sum_{\alpha,\beta=-1,0,+1}K^{\varphi(1)}_{\alpha\beta}\delta(x-\alpha)\delta(x-\beta),
\end{equation}
where the coefficients
\begin{equation}
\bm{K}^{\varphi(1)}=
\begin{pmatrix}
\mathrm{Re}(K^{\varphi(1)}_{++})\\
\mathrm{Im}(K^{\varphi(1)}_{++})\\
\mathrm{Re}(K^{\varphi(1)}_{+0})\\
\mathrm{Im}(K^{\varphi(1)}_{+0})\\
\mathrm{Re}(K^{\varphi(1)}_{+-})\\
\mathrm{Im}(K^{\varphi(1)}_{+-})\\
\mathrm{Re}(K^{\varphi(1)}_{0+})\\
\mathrm{Im}(K^{\varphi(1)}_{0+})\\
\mathrm{Re}(K^{\varphi(1)}_{00})\\
\mathrm{Im}(K^{\varphi(1)}_{00})\\
\mathrm{Re}(K^{\varphi(1)}_{0-})\\
\mathrm{Im}(K^{\varphi(1)}_{0-})\\
\mathrm{Re}(K^{\varphi(1)}_{-+})\\
\mathrm{Im}(K^{\varphi(1)}_{-+})\\
\mathrm{Re}(K^{\varphi(1)}_{-0})\\
\mathrm{Im}(K^{\varphi(1)}_{-0})\\
\mathrm{Re}(K^{\varphi(1)}_{--})\\
\mathrm{Im}(K^{\varphi(1)}_{--})\\
\end{pmatrix}
\end{equation}
are determined by
\begin{equation}
\bm{K^{\varphi(1)}} = T^{\varphi}\bm{\rho}+\bm{K_{0}^{\varphi}}
\end{equation}
by using
\begin{equation}
T^{\varphi}=
\begin{pmatrix}
C_{+} & 0 & Q_{+} & 0 & S/2 & 0 & 0 & 0\\
0 & 0 & 0 & Q_{+} & 0 & S/2 & 0 & 0\\
-S & -S & Q_{+} & 0 & 0 & 0 & Q_{-} & 0\\
0 & 0 & 0 & -Q_{+} & 0 & 0 & 0 & Q_{-}\\
0 & C_{-} & 0 & 0 & S/2 & 0 & Q_{-} & 0\\
0 & 0 & 0 & 0 & 0 &-S/2 & 0 & -Q_{-}\\
S & 0 & -P & 0 & -S & 0 & 0 & 0\\
0 & 0 & 0 & -P & 0 & -S & 0 & 0\\
-C_{0} & -C_{0} &-P & 0 & 0 & 0 & P & 0\\
0 & 0 & 0 & P & 0 & 0 & 0 & P\\
0 & S & 0 & 0 & -S & 0 & P & 0\\
0 & 0 & 0 & 0 & 0 & S & 0 & -P\\
C_{-} & 0 & -Q_{-} & 0 & S/2 & 0 & 0 & 0\\
0 & 0 & 0 & -Q_{-} & 0 & S/2 & 0 & 0\\
-S & -S & -Q_{-} & 0 & 0 & 0 & Q_{+} & 0\\
0 & 0 & 0 & Q_{-} & 0 & 0 & 0 & -Q_{+}\\
0 & C_{+} & 0 & 0 & S/2 & 0 & -Q_{+} & 0\\
0 & 0 & 0 & 0 & 0 & -S/2 & 0 & Q_{+}
\end{pmatrix},\;
\bm{K_{0}^{\varphi}}=
\begin{pmatrix}
0\\
0\\
S\\
0\\
0\\
0\\
0\\
0\\
0\\
C_{0}\\
0\\
0\\
0\\
0\\
S\\
0\\
0\\
0
\end{pmatrix}
\end{equation}
with
\begin{equation}
C_{0}\coloneqq\cos^2\varphi,
\end{equation}
\begin{equation}
C_{+}\coloneqq\Bigl(\frac{1+\cos\varphi}{2}\Bigr)^2,
\end{equation}
\begin{equation}
C_{-}\coloneqq\Bigl(\frac{1-\cos\varphi}{2}\Bigr)^2,
\end{equation}
\begin{equation}
S\coloneqq\frac{\sin^2\varphi}{2},
\end{equation}
\begin{equation}
P\coloneqq\frac{\sin\varphi\cos\varphi}{\sqrt{2}},
\end{equation}
\begin{equation}
Q_{+}\coloneqq\frac{\sin\varphi(1+\cos\varphi)}{2\sqrt{2}},
\end{equation}
\begin{equation}
Q_{-}\coloneqq\frac{\sin\varphi(1-\cos\varphi)}{2\sqrt{2}}.
\end{equation}
Since
\begin{equation}
\mathrm{rank}(T^{\varphi})=8
\end{equation}
whenever \(\varphi\neq n\pi,\,n\in\mathbb{Z}\), we can determine \(\bm{\rho}\) from \(\bm{K^{\varphi(1)}}\).
This is to say that the state of the three-state quantum system \(\rho\) may be recovered from the Kirkwood-Dirac distribution of \(A\) and \(B\), and further from that of any non-commutative pair of observables in the spin 1 representation of \(\mathfrak{su}(2)\). 
\end{proof}
We did not yet find any pairs of observables that do not belong to the spin 1 representation of \(\mathfrak{su}(2)\) but may distinguish the state of the three-state quantum system by their Kirkwood-Dirac distributions.
Proposition \ref{他の物理量についてのKirkwood-Dirac関数　3D} indicates the usefulness of the spin operator when considering its quasi-joint-probability distribution.

\section{Quasi-Probabilities for General Finite-State Quantum System}
\label{Quasi-Probabilities for General Finite-State Quantum System}
In Secs.~\ref{Quasi-Probabilities for Two-State Quantum System} and \ref{Quasi-Probabilities for Three-State Quantum System}, we have seen the features of the quasi-joint-probability distribution for the two-state and three-state quantum systems and found the usefulness of the Kirkwood-Dirac distribution of two directions of spin.
In this section, we will look at the features of the quasi-joint-probability distribution that we can say generally in all finite-state quantum systems.

\subsection{Possible values of observables}
\label{Possible Values of Observables}
First, we will see the problem regarding the possible value of the observables, which we defined in Subsec.~\ref{Finit-State Quantum System and Possible Values of Observables} as the combination of the eigenvalues of the observables in question:
\begin{align*}
(\alpha_1,\alpha_2,...,\alpha_n)\;&\mbox{is a possible value of}\;(A_1,A_2,...,A_n)\\
&\iff \alpha_i\;\mbox{is an eigenvalue of}\;A_i\;\mbox{for}\;i=1,2,...,n.
\stepcounter{equation}\tag{\theequation}
\end{align*}
\begin{prop}
\label{取り得る値　有限次元}
The Kirkwood-Dirac distribution of any combinations of observables with respect to an arbitrary state \(\rho\) of an \(N\)-state quantum system takes non-zero values only at the possible values of the observables:
\begin{equation}
K_{\rho}^{A_1,A_2,...,A_n}(x_1,x_2,...,x_n)\neq0
\end{equation}
only if \((x_1,x_2,...,x_n)\) are the possible values of the observables \((A_1,A_2,...,A_n)\) with \(N,n\in\mathbb{N}\).
\end{prop}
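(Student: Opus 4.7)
The plan is to exploit the spectral decomposition of each observable together with the multiplicative structure of the Kirkwood-Dirac hashed operator. For a finite-state quantum system, each observable $A_k$ acts on a finite-dimensional Hilbert space and thus admits a finite spectral resolution
\begin{equation}
A_k = \sum_{j_k=1}^{m_k} \alpha_{k,j_k} E_{A_k}(\alpha_{k,j_k}),
\end{equation}
where $\{\alpha_{k,j_k}\}$ are the eigenvalues of $A_k$. This implies that the one-parameter unitary group admits the finite sum
\begin{equation}
e^{-is_k A_k} = \sum_{j_k=1}^{m_k} e^{-is_k \alpha_{k,j_k}} E_{A_k}(\alpha_{k,j_k}),
\end{equation}
which is the key ingredient that is unavailable for infinite-dimensional systems.

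Next, I would substitute this expression into the defining formula (\ref{Kirkwood-Dirac distribution の hashed operator}) for the Kirkwood-Dirac hashed operator and expand the product. Since all sums are finite, one obtains
\begin{equation}
\hat{\#}^K_{\bm{A}}(\bm{s}) = \sum_{j_1,\dots,j_n} e^{-i\sum_{k=1}^n s_k \alpha_{k,j_k}} \prod_{k=1}^n E_{A_k}(\alpha_{k,j_k}),
\end{equation}
which is a finite linear combination of pure phase factors with operator coefficients that do not depend on $\bm{s}$.

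Then I would apply the inverse Fourier transform termwise, using the standard identity that the inverse Fourier transform of $e^{-i\bm{s}\cdot\bm{\alpha}}$ is a product of delta functions supported at $\bm{\alpha}$. This yields
\begin{equation}
\#^K_{\bm{A}}(\bm{a}) = \sum_{j_1,\dots,j_n} \left(\prod_{k=1}^n E_{A_k}(\alpha_{k,j_k})\right) \prod_{k=1}^n \delta(a_k - \alpha_{k,j_k}),
\end{equation}
and taking the trace against $\rho$ gives the Kirkwood-Dirac distribution as a finite linear combination of products of delta functions each centered at some combination $(\alpha_{1,j_1},\dots,\alpha_{n,j_n})$ of eigenvalues, i.e., at a possible value of $\bm{A}$. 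The claim follows immediately.

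No substantive obstacle is expected here; the argument is a straightforward extension of the two- and three-state computations carried out in Props.~\ref{取り得る値　２次元} and \ref{取り得る値　３次元}. The only subtlety worth flagging is that one must be careful about the ordering of the projection operators $\prod_k E_{A_k}(\alpha_{k,j_k})$, since the observables $A_k$ do not generally commute, so these products are not themselves projections; nevertheless, this ordering only affects the (possibly complex) coefficient in front of each delta function and plays no role in locating its support.
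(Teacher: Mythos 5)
Your argument is correct and is essentially the paper's own proof in a different notation: the paper diagonalizes each \(A_k\) as \(U_k\,\mathrm{diag}(\alpha_{k1},\dots,\alpha_{kN})\,U_k^\dagger\), writes \(e^{-is_kA_k}\) accordingly, and pushes the inverse Fourier transform through to get delta functions at the eigenvalues, which is exactly your spectral-projection expansion \(e^{-is_kA_k}=\sum_{j_k}e^{-is_k\alpha_{k,j_k}}E_{A_k}(\alpha_{k,j_k})\) followed by termwise inversion. Your explicit multi-index sum (and the remark that the non-commuting projection products only affect the coefficients, not the support) is, if anything, a slightly cleaner presentation of the same idea.
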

\begin{proof}
All observables for the finite-dimensional quantum system can be diagonalized
\begin{equation}
A_i=U_{i}\,
\mathrm{diag}(\alpha_{i1},\alpha_{i2},...,\alpha_{iN})\,
U^{\dagger}_{i}
\end{equation}
by using unitary matrices \(U_{i}\) for \(i=1,2,...,n\), since observables are Hermitian matrices.
Here, \(\alpha_{ij}\;(j=1,2,...,N)\) are the eigenvalues of \(A_i\) for \(i=1,2,...,n\), where we allow for some \(j_1\neq j_2\), \(\alpha_{ij_1}=\alpha_{ij_2}\).
Therefore, the exponential functions of an observable is given by
\begin{equation}
e^{-is_{i}A_i}=U_{i}\,
\mathrm{diag}(e^{-is\alpha_{i1}},e^{-is\alpha_{i2}},...,e^{-is\alpha_{iN}})\,
U^{\dagger}_{i}.
\end{equation}
By using this, the QJSD of the Kirkwood-Dirac transform is given by
\begin{align*}
\label{Kirkwood-DiracのQJSD　固有値のデルタ関数}
\#^{\mathrm{K}}_{A_1,A_2,...,A_{n}}(x_1,x_2,...,x_n)
&\coloneqq \mathcal{F}^{-1}\Biggl(\prod_{k=1}^n e^{-is_kA_k}\Biggr)(x_1,x_2,...,x_n)\\
&=\mathcal{F}^{-1}\Biggl(\prod_{k=1}^n U_{k}\,
\mathrm{diag}(e^{-is_k\alpha_{k1}},...,e^{-is_k\alpha_{kN}})\,U^{\dagger}_{k}\Biggr)(x_1,...,x_n)\\
&=\prod_{k=1}^n U_{k}\,
\mathrm{diag}\Bigl(\delta(x-\alpha_{k1}),...,\delta(x-\alpha_{kN})\Bigr)\,U^{\dagger}_{k}
\stepcounter{equation}\tag{\theequation}.
\end{align*}
Therefore, every components of the quasi-joint-spectrum distribution (\ref{Kirkwood-DiracのQJSD　固有値のデルタ関数}), thus the Kirkwood-Dirac distribution with respect to an arbitrary state, becomes a linear combination of the delta functions that have the peak at the possible values of the observables:
\begin{equation}
K_{\rho}^{A_1,...,A_n}(x_1,...,x_n)=\prod_{k=1}^{n}\sum_{l=1}^{N}c_{kl}\delta(x_k-\alpha_{kl}),
\end{equation}
with coefficients \(c_{kl}\in\mathbb{C}\;(k=1,...,n,\,l=1,...,N)\) that satisfy the normalization condition \(\prod_{k=1}^{n}\sum_{l=1}^{N}c_{kl}=1\).
\end{proof}
Proposition \ref{取り得る値　有限次元} indicates that the properties of the Kirkwood-Dirac distribution that it becomes a discretized distribution when the possible values of the observable are discretized holds for all finite-state quantum systems.
Because of the way we prove it, we can see that the quasi-joint-probability distributions other than the Kirkwood-Dirac type do not fulfill the properties in Prop.~\ref{取り得る値　有限次元} from the same argument as in Subsec.~\ref{Other Quasi-Classicalizations}.
A similar argument was made by Hofmann \cite{Hofmann2014}, who pointed out that the quasi-probability is limited to the Kirkwood-Dirac distribution under the condition that the quasi-probability \(P_{\psi}(a,b)\) of the observables \(A\) and \(B\) must be zero if the state \(\ket{\psi}\) is the eigenstate of \(A\) of a different eigenvalue \(a'\neq a\) \cite{Hofmann2014}, while our condition is that \(P(a,b)\) must be zero for arbitrary states if \(a\) is not an eigenvalue of \(A\).

\subsection{Restriction to the degeneracy}
\label{Restriction to the Degeneracy}
First, we present a reason why we can distinguish the state of a system by the Kirkwood-Dirac distribution of only two observables.
Since the density operator of the \(N\)-state quantum system has \(N^2-1\) independent components, the quasi-joint-probability distribution must have equal to or more than \(N^2-1\) independent components in order to distinguish the state.
Proposition \ref{取り得る値　有限次元} shows that the Kirkwood-Dirac distribution of \(n\) observables generally has \(2N^n\) parameters.
Although these parameters are not independent of each other because of the restriction such as the condition about marginal probabilities (\ref{probability of one observable}), the Kirkwood-Dirac distribution of \(n\) observables have approximately \(O(N^n)\) independent components.
By comparing the independent components of the density operator and those of the Kirkwood-Dirac distribution, we can roughly see a reason why we only need two observables in order to distinguish the states by the Kirkwood-Dirac distribution of them.

Since the degeneracy of the eigenvalues of the observables in question reduces the parameters of the Kirkwood-Dirac distribution, there exist restrictions to the degeneracy so that the states can be distinguished by the Kirkwood-Dirac distribution of the observables.
The condition for the degeneracy is given as follows.
\begin{prop}
\label{縮退に対する条件}
In order to distinguish the state of an \(N\)-state quantum system by the Kirkwood-Dirac distribution of the observables \(A\) and \(B\), the numbers of their distinct eigenvalues \(N_A\) and \(N_B\) must satisfy the inequality
\begin{equation}
\label{condition to degeneracy　一般}
2N^2-1\leq(2N_A-1)(2N_B-1).
\end{equation}
\end{prop}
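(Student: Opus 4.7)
The plan is a straightforward dimension-counting argument. Distinguishability of states by the Kirkwood-Dirac distribution is equivalent to the linear map $\rho \mapsto K^{A,B}_\rho$, from the space of Hermitian $N\times N$ matrices (real dimension $N^2$) to the space of complex-valued functions on the pairs of possible values, being injective. I would therefore prove the inequality by bounding the real dimension of the image from above using the structural constraints I can identify, and then demanding that this upper bound be at least $N^2$.

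For the codomain, Proposition~\ref{取り得る値　有限次元} tells us that $K^{A,B}_\rho$ is supported on at most $N_A N_B$ pairs $(\alpha_i,\beta_j)$ of distinct eigenvalues, so it is specified by $N_A N_B$ complex numbers, giving $2 N_A N_B$ real parameters. The universal constraints I would then impose come from the reality of both marginals: since the marginals of $K^{A,B}_\rho$ reproduce the Born-rule probabilities of $A$ and $B$, which are real, we have
\begin{equation*}
\mathrm{Im}\sum_{j=1}^{N_B}K^{A,B}_\rho(\alpha_i,\beta_j)=0 \ \ (i=1,\ldots,N_A), \qquad \mathrm{Im}\sum_{i=1}^{N_A}K^{A,B}_\rho(\alpha_i,\beta_j)=0 \ \ (j=1,\ldots,N_B),
\end{equation*}
for a total of $N_A+N_B$ linear equations. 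Summing over $i$ in the first set and over $j$ in the second both yield the single identity $\mathrm{Im}\sum_{i,j}K^{A,B}_\rho(\alpha_i,\beta_j)=0$, so there is exactly one linear dependence among them; the remaining $N_A+N_B-1$ independent equations cut the image into a subspace of real dimension at most $2N_A N_B-(N_A+N_B-1)$.

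Combining these, injectivity requires $N^2\leq 2N_A N_B-N_A-N_B+1$, which upon multiplying by two rearranges to $2N^2-1 \leq 4N_A N_B-2N_A-2N_B+1 = (2N_A-1)(2N_B-1)$, the claimed inequality. The main obstacle I anticipate is the careful bookkeeping of the single linear dependence among the $N_A+N_B$ marginal-reality equations; any further hidden universal constraints on the image would only sharpen the upper bound on its dimension in the same direction, so the necessary condition stated in the proposition is insensitive to such refinements and the argument is robust.
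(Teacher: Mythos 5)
Your proof is correct and follows essentially the same parameter-counting argument as the paper: support of size $N_AN_B$ giving $2N_AN_B$ real components, minus $N_A+N_B-1$ independent marginal-reality constraints, compared against the dimension of the state space. The only difference is bookkeeping (you work with all Hermitian matrices of dimension $N^2$ and omit the normalization constraint, while the paper uses density matrices of dimension $N^2-1$ and counts normalization as one extra condition), and these cancel to give the identical inequality; your explicit verification that there is exactly one dependence among the $N_A+N_B$ marginal-reality equations is a welcome detail the paper leaves implicit.
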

\begin{proof}
From Prop.~\ref{取り得る値　有限次元}, the number of locations where the Kirkwood-Dirac distribution of \(A\) and \(B\) takes non-zero values is \(N_AN_B\).
Since the Kirkwood-Dirac distribution has both real and imaginary parts, it has \(2N_AN_B\) components.
The restriction that the marginal probability distribution should be real gives \(N_A+N_B-1\) conditions and that the sum of the quasi-probability distribution over the whole range should be unity gives one condition.
Since the number of independent components of the Kirkwood-Dirac distribution should be equal to or more than the number of that of the density operator in order for the state to be distinguished by the Kirkwood-Dirac distribution, the inequality
\begin{equation}
N^2-1\leq 2N_AN_B-N_A-N_B,
\end{equation}
or equivalently,
\begin{equation}
2N^2-1\leq(2N_A-1)(2N_B-1).
\end{equation}
must be satisfied.
\end{proof}
As a special case of Prop.~\ref{縮退に対する条件}, the following corollaries hold.
\begin{cor}
\label{縮退に関する条件　両方同じ}
In order to distinguish the state of an \(N\)-state quantum system by the Kirkwood-Dirac distribution of two observables which have the same number \(N'\) of distinct eigenvalues, the inequality
\begin{equation}
\label{condition to degeneracy 両方同じ}
N'\geq\frac{1}{2}\Bigl(\sqrt{2N^2-1}+1\Bigr)
\end{equation}
must be satisfied.
\end{cor}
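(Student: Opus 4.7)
The plan is to obtain this bound as an immediate specialization of Proposition~\ref{縮退に対する条件}, which asserts that in order for the state of an $N$-state quantum system to be distinguishable by the Kirkwood-Dirac distribution of observables $A$ and $B$, the inequality $2N^2 - 1 \leq (2N_A - 1)(2N_B - 1)$ must be satisfied, where $N_A$ and $N_B$ denote the numbers of distinct eigenvalues of $A$ and $B$. Under the hypothesis of the corollary, both observables share the same number $N'$ of distinct eigenvalues, so I would simply set $N_A = N_B = N'$ in that inequality to obtain $2N^2 - 1 \leq (2N' - 1)^2$.

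The remaining step is a routine algebraic manipulation. Since any observable has at least one eigenvalue, $N' \geq 1$, so $2N' - 1 \geq 1 > 0$; consequently, taking the square root of both sides preserves the direction of the inequality without introducing a sign ambiguity, giving $\sqrt{2N^2 - 1} \leq 2N' - 1$, which rearranges at once to the desired $N' \geq \tfrac{1}{2}\bigl(\sqrt{2N^2 - 1} + 1\bigr)$. Because the result is essentially a one-line specialization of the preceding proposition, there is no substantive obstacle; the only point requiring mild attention is verifying the positivity of $2N' - 1$ that legitimates the square-root step.
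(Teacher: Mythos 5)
Your proposal is correct and is exactly the route the paper intends: the corollary is stated as an immediate specialization of Proposition~\ref{縮退に対する条件} obtained by setting $N_A=N_B=N'$ and solving $(2N'-1)^2\geq 2N^2-1$ for $N'$, with the square root justified since $2N'-1>0$. No further comment is needed.
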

\begin{cor}
\label{縮退に対する条件　片方}
In order to distinguish the state of an \(N\)-state quantum system by the Kirkwood-Dirac distribution of a non-degenerated observable \(A\) and an observable \(B\) which have \(N'\) distinct eigenvalues, the inequality
\begin{equation}
\label{condition to degeneracy 片方}
N'\geq \frac{N^2+N-1}{2N-1}
\end{equation}
must be satisfied.
\end{cor}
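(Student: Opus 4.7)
The plan is to derive this corollary as a direct specialization of Proposition~\ref{縮退に対する条件}. Since $A$ is assumed non-degenerate on an $N$-state quantum system, its number of distinct eigenvalues equals the dimension itself, so $N_A = N$. The second observable $B$ has $N_B = N'$ distinct eigenvalues by hypothesis. I will simply substitute these values into the inequality $2N^2 - 1 \leq (2N_A - 1)(2N_B - 1)$ established in Proposition~\ref{縮退に対する条件}.

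After substitution, the inequality becomes $2N^2 - 1 \leq (2N - 1)(2N' - 1)$. To isolate $N'$, I divide both sides by $2N - 1 > 0$ (valid since $N \geq 1$) to obtain $2N' - 1 \geq (2N^2 - 1)/(2N - 1)$. Adding one to each side and using the common denominator yields
\begin{equation}
2N' \geq \frac{2N^2 - 1}{2N - 1} + 1 = \frac{2N^2 + 2N - 2}{2N - 1} = \frac{2(N^2 + N - 1)}{2N - 1},
\end{equation}
from which dividing by $2$ produces the claimed bound $N' \geq (N^2 + N - 1)/(2N - 1)$.

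There is no real obstacle here, since the corollary is purely an arithmetic consequence of the parent proposition; the only thing to be mildly careful about is confirming the direction of the inequality when dividing by $2N - 1$ and that we are allowed to set $N_A = N$ for a non-degenerate observable on an $N$-dimensional space. Both checks are immediate, so the proof reduces to a short specialization and rearrangement.
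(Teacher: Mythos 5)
Your proposal is correct and matches the paper's (implicit) argument: the corollary is stated as a special case of Proposition~\ref{縮退に対する条件}, obtained exactly by setting $N_A = N$ and $N_B = N'$ and rearranging, and your arithmetic checks out.
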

Since Ineq.~(\ref{condition to degeneracy 片方}) becomes \(N'\geq5/3\) in the case of \(N=2\) and \(N'\geq11/5\) in the case of \(N=3\), no degeneracy is allowed for both the cases because \(N'\leq N\) and \(N'\in\mathbb{N}\).
It is to be remarked that they are only the necessity conditions.
For example, although the reducible representation of the spins
\begin{equation}
A=
\begin{pmatrix}
0 & 1 & 0\\
1 & 0 & 0\\
0 & 0 & 0
\end{pmatrix},\;\;\;
B=
\begin{pmatrix}
0 & -i & 0\\
i & 0 & 0\\
0 & 0 & 0
\end{pmatrix}
\end{equation}
satisfy the degeneracy condition (\ref{condition to degeneracy　一般}), the state of the three-state quantum system cannot be distinguished by their Kirkwood-Dirac distribution.

\section{Summary and Discussion}
\label{Summary and Discussion}
Through the above discussions, we have investigated the features of the quasi-joint-probability distributions for the finite-state quantum systems and showed the advantages of the Kirkwood-Dirac distribution.
We discovered that the Kirkwood-Dirac distribution takes non-zero values only at the possible values in the same manner as the genuine probability distribution, which is in contrast to most of the other quasi-joint-probability distributions.
We also proved for the two-state and three-state quantum systems that the states of the system can be distinguished by the Kirkwood-Dirac distribution of only two directions of the spin.
In addition, we discovered that the non-Hermicity of the quasi-joint-spectrum distribution is the necessary and sufficient condition for the states of the system to be distinguished by a quasi-joint-probability distribution of the \(x\) and \(y\) components of the spin in the two-state quantum system.
These facts support the usefulness of the Kirkwood-Dirac distribution for analyzing the state of a quantum system.

Still, there remains the question of whether the similar property of the statements about the Kirkwood-Dirac distribution for the two-state and the three-state quantum systems generally hold in systems described by the higher-dimensional Hilbert spaces.
It is also worth investigating how these discussions are related to the quasi-probability for infinite-state systems, such as the familiar optical systems.

\section*{Acknowledgement}
S.U. acknowledges the opportunity provided by the Undergraduate Research Opportunity Program (UROP) of Institute of Industrial Science, The University of Tokyo.
The present work is partially supported by JSPS KAKENHI Grant Numbers JP19H00658, JP21H01005, JP22H01140 and JP22K13970.

\bibliographystyle{unsrt}
\bibliography{umekawa}

\end{document}